\newtheorem{thm}{Theorem}
\newtheorem{prop}{Proposition}
\newtheorem{cor}{Corollary}
\newtheorem{lemma}{Lemma}
\newcommand{\bbr}{{\mathbb R}}
\newcommand{\bbs}{{\mathbb S}}
\newcommand{\p}{\hat{p}}
\newcommand{\q}{\hat{q}}
\newcommand{\n}{\hat{n}}
\begin{document}
\title{Bianchi I solutions of the Einstein-Boltzmann system with a positive cosmological constant}

\author[1]{Ho Lee\footnote{holee@khu.ac.kr}}
\author[2]{Ernesto Nungesser\footnote{ernesto.nungesser@icmat.es}}

\affil[1]{Department of Mathematics and Research Institute for Basic Science, Kyung Hee University, Seoul, 02447, Republic of Korea}
\affil[2]{Instituto de Ciencias Matem\'{a}ticas (CSIC-UAM-UC3M-UCM), 28049 Madrid, Spain}

\maketitle

\begin{abstract}
In this paper we study the future global existence and late-time behaviour of the Einstein-Boltzmann system with Bianchi I symmetry and a positive cosmological constant $\Lambda>0$. For the Boltzmann equation we consider the scattering kernel of Israel particles which are the relativistic counterpart of Maxwellian particles. Under a smallness assumption on initial data in a suitable norm we show that solutions exist globally in time and isotropize at late times. \end{abstract}

\section{Introduction}
When describing cosmological solutions at a macroscopic level a usual choice for the matter model is a fluid. The pressure is then a function of the energy density, and the relation between these two quantities is often linear. However, a more refined description is obtained via kinetic theory. One of the most important models is the Einstein-Boltzmann system, and this system is very interesting from the mathematical and physical points of view; the Boltzmann equation is a bridge between micro- and macroscopic laws and is crucial for the understanding of many-body physics. In some sense it is also a bridge between classical and quantum physics. In this paper we are interested in the Einstein-Boltzmann system and study the late-time behaviour of it.

We assume the expansion of the Universe as a fact and consider a positive cosmological constant so that the expansion is accelerating. The main objective of the present paper is to improve the previous results \cite{LeeNungesser171,LeeNungesser172}. In \cite{LeeNungesser171} we showed that Bianchi I solutions to the Einstein-Boltzmann system exist globally in time, but imposed artificial restrictions on the scattering kernel. In \cite{LeeNungesser172} we showed that classical solutions to the Boltzmann equation exist globally in time, but assumed that an isotropic spacetime is given with spatially flat geometry and exponentially growing scale factor. In the present paper we continue this line of research considering the coupled Einstein-Boltzmann system with Bianchi I symmetry. Future global existence is shown, and the asymptotic behaviour of the distribution function and the relevant geometric quantities is obtained with specific decay rates. It is shown that the spacetime tends to de Sitter spacetime at late times. To the best of our knowledge it is the first result of this type concerning the coupled Einstein-Boltzmann system with a scattering kernel which is physically well motivated.

The paper is structured as follows. Before coming to the next section, we will introduce some notations and in particular define what will be the relevant norm. We finish the introduction considering isotropic solutions to the coupled Einstein-Boltzmann system which is easily established based on \cite{LeeNungesser172}. This gives the reader an idea of what will be done in the following sections for the more complicated Bianchi I case, where the spacetime is still homogeneous but anisotropic. In Section 2 we present our main results. After that we collect the main equations in Section 3 and some basic estimates in Section 4. The key estimates developed in this paper are then established in Section 5. Finally in Section 6 we explain how the different estimates are combined to prove the main theorem and present some ideas on how to continue this line of research.

\subsection{Notations}\label{sec_notations}
Let $A=(a_1,\cdots,a_m)$ be an $m$-tuple of integers between $1$ and $3$. We write $\partial^A=\partial^{a_1}\cdots\partial^{a_m}$, where $\partial^a=\partial /\partial p_a$ is the partial derivative with respect to $p_a$ for $a\in\{1,2,3\}$ with $|A|=m$ the total order of differentiation. Indices are lowered and raised by the metric $g_{\alpha\beta}$ and $g^{\alpha\beta}$, respectively, and the Einstein summation convention is assumed. Greek letters run from $0$ to $3$, while Latin letters from $1$ to $3$, and momentum variables without indices denote three dimensional vectors, for instance we write
\[
p=(p^1,p^2,p^3),\quad p_*=(p_1,p_2,p_3).
\]

We consider an orthonormal frame $e_\mu(=e_\mu^\alpha E_\alpha)$, i.e. $g_{\alpha\beta}e^\alpha_\mu e^\beta_\nu=\eta_{\mu\nu}$, where $e_0=\partial_t$ and $\eta_{\mu\nu}$ denotes the Minkowski metric. Momentum variables can be written as $p^\alpha E_\alpha=\p^\mu e_\mu$ where we use a hat to indicate that the momentum is written in an orthonormal frame. Note that
\[
p^\alpha=\p^\mu e_\mu^\alpha,\quad \p_\mu(=\eta_{\mu\nu}\p^\nu)=p_\alpha e^\alpha_\mu,
\]
where the Minkowski metric applies in an orthonormal frame. For partial derivatives in an orthonormal frame, we use hats in a similar way, and the derivatives with respect to $p_a$ and $\p_a$ are related to each other as $\partial^a=e^a_b\hat{\partial}^b$. For a multi-index $A=(a_1,\cdots,a_m)$ we have
\[
\partial^A=e^A_B\hat{\partial}^B,
\]
where $e^A_B=e^{a_1}_{b_1}\cdots e^{a_m}_{b_m}$ for $B=(b_1,\cdots,b_m)$.

We also consider the usual $l^2$-norm: for a three-dimensional vector $v$, we define
\[
|v|=\sqrt{\sum_{i=1}^3(v^i)^2},
\]
and note that $|\p|^2=\eta_{ab}\p^a\p^b$. With this notation we define the weight function:
\[
\langle p_*\rangle=\sqrt{1+|p_*|^2},
\]
and note that it is different from $p^0$ in general. With this weight function we define the norm of a function $f=f(t,p_*)$ as follows: for a non-negative integer $N$,
\begin{align*}
\|f(t)\|^2_{k,N}=\sum_{|A|\leq N}\|\partial^A f(t)\|_k^2,\quad \|f(t)\|_k^2=\int_{\bbr^3}\langle p_*\rangle^{2k}e^{p^0(t)}|f(t,p_*)|^2dp_*,
\end{align*}
where $k$ is a positive real number.

\subsection{Isotropic solutions in the coupled Einstein-Boltzmann system}
Before coming to the Bianchi I case, let us establish the results for the coupled Einstein-Boltzmann system with FLRW symmetry in the case of spatially flat geometry based on the previous result \cite{LeeNungesser172}. We consider the metric
\begin{align*}
^{(4)}g=-dt^2+g,\quad g=R^2 (dx^2 + dy^2+dz^2),
\end{align*}
where $R=R(t)>0$ is the scale factor. Thus the spatial metric is $g_{ab}=R^2 \delta_{ab}$, so that $k_{ab}=\dot{g}_{ab}/2=R \dot{R} \delta_{ab}$, where the dot denotes the derivative with respect to time. As a consequence the Hubble variable is $H=k_{ab} g^{ab}/3 = \dot{R}/R$. In contrast to the Bianchi I case, which we will consider in the following sections, there is no shear, i.e. $\sigma_{ab}=k_{ab}-H g_{ab}=0$. For the isotropic case we also assume that the distribution function is invariant under rotations of the momenta. The collision kernel will be described in the next section, but for the moment we only notice that it is invariant under rotations; it does not depend on the scattering angle, and the other variables are invariant under the group $SO(3)$, for example $p^0=(1+R^{-2}|p_*|^2)^{-1/2}$ which depends only on the length of $p_*$. Also the quantity $s$ is invariant under the group $SO(3)$ since $s=(p^0+q^0)^2-R^{-2}|p_*-q_*|^2$. As a consequence the Boltzmann equation preserves the isotropy. Note that we have here an example of a collision kernel satisfying (4.4) of \cite{NTRW}. The governing equations in a FLRW spacetime with a cosmological constant $\Lambda$, which will be considered to be positive, are given as follows:
\begin{align*}
\frac{{\dot{R}}^{2}}{R^{2}}&=\frac{\rho+\Lambda}{3},\\
\frac{3\ddot{R}}{R}&=-\frac{\rho+3P}{2}+\Lambda,
\end{align*}
which are called the Friedmann equations. Here $\rho$ and $P$ are the energy density and the pressure, respectively, which are given by
\begin{align*}
\rho &=  R^{-3} \int_{\bbr^3} f p^0dp_*,\\
P &= R^{-5} \int_{\bbr^3} f \frac{|p_*|^2}{3p^0}dp_*.
\end{align*}
We are interested in solutions having an asymptotic behaviour concerning the distribution function as in \cite{LeeNungesser172}. We thus assume that 
\begin{align*}
f(t,p_*) \leq C_f (1+\vert p_* \vert^2)^{-\frac{k}{2}} e^{-\frac12 p^0},
\end{align*}
for some large $k$, or in an orthonormal frame
\begin{align*}
\hat{f}(t,\p) \leq C_f (1+R^2 \vert \hat{p} \vert^2)^{-\frac{k}{2}} e^{-\frac12 p^0}.
\end{align*}
We assume that our Universe is initially expanding, i.e. $H(t_0)>0$ and non-empty, i.e. $f(t_0)\neq0$. The vacuum case corresponds to the well-known Kasner solutions. Then, from the first Friedmann equation we have $H > \gamma=(\Lambda/3)^{1/2}$ for all times. Concerning the energy density, we have
\begin{align*}
\rho&\leq C_f  R^{-3} \int  (1+ \vert p_* \vert^2)^{-\frac{k}{2}} e^{-\frac12\sqrt{1+ R^{-2} \vert p_* \vert^2}}\sqrt{1+ R^{-2} \vert p_* \vert^2}\,  dp_* \\
&\leq  4\pi C_f R^{-3} \int_0^\infty  r^2 (1+ r^2)^{-\frac{k}{2}} e^{-\frac12 \sqrt{1+ R^{-2} r^2}}\sqrt{1+ R^{-2} r^2} \, dr \leq C C_f R^{-3},
\end{align*}
where we used the fact that $R$ is bounded from below since $H$ is. Similarly we obtain that
\begin{align*}
P \leq C C_f R^{-5}.
\end{align*}
Thus $\rho$ and $P$ are bounded. Applying standard arguments we obtain the global existence of $R$ given $f$ satisfying the asymptotic behaviour mentioned above. Moreover, we can easily obtain an improved estimate for $H$.  Using the lower bound obtained in \cite{LeeNungesser172} for $R$, namely $R\geq e^{\gamma t}$, and the asymptotic behaviour of $f$, we have that the energy density decays exponentially:
\begin{align*}
\rho \leq  C C_f  R^{-3} \leq C C_f e^{-3\gamma t},
\end{align*}
and $P\leq CC_f e^{-5\gamma t}$ for the pressure. From the Friedmann equations we have that
\begin{align*}
\dot{H}= \gamma^2 -H^2-\frac{1}{6}(\rho+3P)\leq\gamma^2 -H^2,
\end{align*}
and we derive
\begin{align*}
\frac{d}{dt}(H-\gamma)\leq -(H-\gamma)(H+\gamma)\leq -2\gamma (H-\gamma),
\end{align*}
which shows that $H$ converges to $\gamma$ exponentially: 
\begin{align*}
H-\gamma =O(e^{-2\gamma t}).
\end{align*}
Let us define $R= e^{\gamma t} \bar{R}$ and use the definition of $H=\dot{R}/R$ to obtain
\begin{align*}
\dot{\bar{R}}=(H-\gamma) \bar{R}.
\end{align*}
Using the estimate of $H-\gamma$ we conclude that $\bar{R}$ is bounded, and as a consequence we obtain
\begin{align*}
\dot{\bar{R}}=O(e^{-2\gamma t}).
\end{align*}
This means that $\bar{R}$ will in fact converge to the expression given by
\begin{align*}
\bar{R}_\infty= \bar{R}(t_0)+ \int^\infty_{t_0} (H-\gamma) \bar{R}\, dt.
\end{align*}
Putting things together we obtain that $R$ converges exponentially to an exponentially growing function:
\begin{align*}
R= e^{\gamma t} (\bar{R}_\infty + O(e^{-2\gamma t})).
\end{align*}
We now combine this with the result of \cite{LeeNungesser172} to obtain the result for the coupled case. Suppose that initial data $R(t_0)$, $\dot{R}(t_0)$, and $f(t_0)$ are given, and define an iteration $\{R_n\}$ and $\{f_n\}$ as follows. Let $R_0= e^{\gamma t} \bar{R}(t_0)$, and note that $R_0$ satisfies the condition of Theorem 1 of \cite{LeeNungesser172}. As a consequence there exists a small positive $\varepsilon_0$ such that if $\Vert f(t_0) \Vert^2_{k,N} <\varepsilon_0$, where the norm is defined in Section \ref{sec_notations}, then there exists a unique non-negative classical solution $f_0$ to the Boltzmann equation in a given spacetime with scale factor $R_0$. Note that the solution satisfies $\sup_{0\leq t<\infty}\|f_0(t)\|^2_{k,N}\leq C\varepsilon_0$ for some constant $C>0$. Suppose now that $f_n$ is given such that $\sup_{0\leq t<\infty}\Vert f_n(t) \Vert^2_{k,N} \leq C\varepsilon_0$ with $\Vert f(t_0) \Vert^2_{k,N} <\varepsilon_0$. This means that we have the desired asymptotic behaviour for $f_n$, hence $R_{n+1}$ exists globally in time. As a result we have an exponential growth for the scale factor using Lemma 1 of \cite{LeeNungesser172}. Applying Theorem 1 of \cite{LeeNungesser172} again we obtain $f_{n+1}$, and this completes the iteration. We have uniform bounds for all the relevant quantities and thus can take the limit up to a subsequence to obtain classical solutions to the coupled equations with the asymptotic behaviour for $f$ and $R$ as described above. To summarize, we briefly considered here the isotropic case as an introduction and obtained certain asymptotic behaviours for the metric and the distribution function. For the Bianchi I case the procedure will be much similar to the isotropic case, but below neither the metric nor the distribution function will be isotropic.

\section{Main results}
We state the main theorem of the present paper.
\begin{thm}\label{Thm}
Consider the Einstein-Boltzmann system \eqref{evolution1}--\eqref{constraint2} with Bianchi I symmetry and a positive cosmological constant. Suppose that the assumption on the scattering kernel holds and the Hubble variable is initially given as $H(t_0)<(7/6)^{1/2}\gamma$ with $\gamma=(\Lambda/3)^{1/2}$. Let $g_{ab}(t_0)$, $k_{ab}(t_0)$, and $f(t_0)$ be initial data of the Einstein-Boltzmann system satisfying the constraints \eqref{constraint1}--\eqref{constraint2} such that $\|f(t_0)\|_{k+1/2,N}$ is bounded for $k>5$ and $N\geq 3$. Then, there exists a small $\varepsilon>0$ such that if $\|f(t_0)\|_{k+1/2,N}<\varepsilon$, then there exists a unique classical solutions $g_{ab}$, $k_{ab}$, and $f$ to the Einstein-Boltzmann system corresponding to the initial data. The solutions exist globally in time, the spacetime is geodesically future complete, and the distribution function $f$ is nonnegative. Moreover, there exist constant matrices $\mathcal{G}_{ab}$ and $\mathcal{G}^{ab}$ such that 
\begin{align*}
&H=\gamma+O(e^{-2\gamma t}),\\
&\sigma^{ab}\sigma_{ab}=O(e^{-2\gamma t}),\\
&g_{ab}=e^{2\gamma t}\Big(\mathcal{G}_{ab}+O(e^{-\gamma t})\Big),\\
&g^{ab}=e^{-2\gamma t}\Big(\mathcal{G}^{ab}+O(e^{-\gamma t})\Big),
\end{align*}
and the distribution function $f$ satisfies in an orthonormal frame
\[
\hat{f}(t,\hat{p})\leq C \varepsilon (1+e^{2\gamma t} |\hat{p}|^2)^{-\frac12 k} e^{-\frac12 p^0},
\]
where $C$ is a positive constant. 
\end{thm}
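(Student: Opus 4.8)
\emph{Strategy.} I would mirror the iteration used for the isotropic case in the introduction. Call a Bianchi I background \emph{admissible} if $g_{ab}$ is positive definite with $e^{-2\gamma t}g_{ab}$ and $e^{2\gamma t}g^{ab}$ bounded and convergent, and $\gamma\le H<(7/6)^{1/2}\gamma$, $H-\gamma=O(e^{-2\gamma t})$, $\sigma_{ab}\sigma^{ab}=O(e^{-2\gamma t})$. The plan is: (A) given an admissible background, solve the Boltzmann equation; (B) given the resulting $f$, solve the Einstein evolution equations and check the new geometry is again admissible with the stated asymptotics; (C) iterate, obtain bounds uniform in the iteration index, pass to a subsequential limit, and prove uniqueness of the coupled solution by a contraction estimate.

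\emph{Step A.} In Bianchi I the spatial transport term vanishes and, in the covariant momentum variables $p_*$, the Vlasov characteristics are trivial ($\dot p_a=0$ along geodesics), so the Boltzmann equation reduces to $\partial_t f=(p^0)^{-1}Q(f,f)$ with $p^0=\sqrt{1+g^{ab}p_ap_b}$ depending on $t$ through the admissible metric. For $\|f(t_0)\|_{k+1/2,N}<\varepsilon$ with $k>5$, $N\ge3$, Sections~4--5 provide a unique global nonnegative solution with $\sup_t\|f(t)\|_{k,N}\lesssim\varepsilon$ and $\hat f(t,\hat p)\le C\varepsilon(1+e^{2\gamma t}|\hat p|^2)^{-k/2}e^{-p^0/2}$. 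The mechanism is a weighted high-order energy estimate: for $|A|\le N$ one multiplies $\partial^A$ of the equation by $\langle p_*\rangle^{2k}e^{p^0}\partial^A f$ and integrates; the time derivative falling on the weight produces $\partial_t p^0=-H\tfrac{(p^0)^2-1}{p^0}-\tfrac{\sigma^{ab}p_ap_b}{p^0}$, whose first term is a damping term of the right sign because $H\ge\gamma>0$, while the shear part is an error absorbed using $\sigma_{ab}\sigma^{ab}=O(e^{-2\gamma t})$. Since we linearize about the vacuum $f\equiv0$ (not about a Maxwellian), the collision contribution $\partial^A\big((p^0)^{-1}Q(f,f)\big)$ is purely quadratic and is controlled, using the estimates of Section~4 and the structure of the Israel-particle kernel, by $O(\varepsilon)$ times $\|f\|_{k,N}^2$; the extra half power of weight in the data compensates the degeneration of the damping as the mass shell collapses to $p^0\to1$. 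This yields $\|f(t)\|_{k,N}\le C_0\varepsilon$ for all $t$, and the Sobolev embedding $H^N\hookrightarrow L^\infty$ on $\bbr^3$ ($N\ge3$), together with the weight, turns it into the pointwise bound.

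\emph{Step B.} From that pointwise bound and $\det g\sim e^{6\gamma t}$ one gets $\rho=(\det g)^{-1/2}\int fp^0\,dp_*\lesssim\varepsilon e^{-3\gamma t}$, and similarly for the pressure and the trace-free spatial stress. The Hamiltonian constraint $3H^2=\rho+\Lambda+\tfrac12\sigma_{ab}\sigma^{ab}$ forces $H\ge\gamma$; the Raychaudhuri equation has the form $\dot H=\gamma^2-H^2-(\text{nonnegative terms in }\sigma_{ab}\sigma^{ab}\text{ and }\rho+3P)$, hence $\dot H\le\gamma^2-H^2\le0$, so $H$ remains below $(7/6)^{1/2}\gamma$ and, exactly as in the introduction, $H-\gamma=O(e^{-2\gamma t})$; reinserting this into the constraint gives $\sigma_{ab}\sigma^{ab}\le 6(H^2-\gamma^2)=O(e^{-2\gamma t})$, whence $\sigma_{ab}=O(e^{\gamma t})$. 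Writing $g_{ab}=e^{2\gamma t}\tilde g_{ab}$ in $\dot g_{ab}=2Hg_{ab}+2\sigma_{ab}$ gives $\dot{\tilde g}_{ab}=2(H-\gamma)\tilde g_{ab}+2e^{-2\gamma t}\sigma_{ab}$, which is $O(e^{-\gamma t})$ once $\tilde g_{ab}$ is known bounded (itself a consequence of this identity and $\int^\infty e^{-2\gamma t}\,dt<\infty$); therefore $\tilde g_{ab}\to\mathcal G_{ab}$ at rate $O(e^{-\gamma t})$, and likewise $e^{2\gamma t}g^{ab}\to\mathcal G^{ab}$. Thus the output geometry is again admissible and realizes all the asymptotics claimed for $H$, $\sigma^{ab}\sigma_{ab}$, $g_{ab}$, $g^{ab}$.

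\emph{Step C and main obstacle.} The constants in Steps~A and~B depend on $\varepsilon$ and the fixed limiting profiles but not on the iteration index, so the iterates $\{g^{(n)}_{ab},f^{(n)}\}$, and by the equations their time derivatives, are uniformly bounded; a compactness/diagonal argument over compact time intervals produces a subsequence converging to a classical solution of the coupled system (the constraints, imposed at $t_0$, being propagated by the evolution) with all the stated bounds and asymptotics. Geodesic future completeness follows from $H\ge\gamma>0$ (the spatial volume grows at least like $e^{3\gamma t}$, so causal geodesics have infinite affine length), and nonnegativity of $f$ is inherited from the gain--loss iteration used in Step~A. Uniqueness comes from estimating, in the same weighted norm, the difference of two solutions with the same data and invoking Gr\"onwall, using the quadratic structure of $Q$ and the bounds already established. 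The decisive difficulty is Step~A: the weighted, high-order energy estimate for the relativistic Boltzmann equation with the Israel-particle kernel on an anisotropic, only asymptotically--de Sitter background — one must turn the time dependence of $p^0$ and the weight $e^{p^0}$ into usable damping despite its degeneration (the source of the half-power weight loss), estimate $\partial^A\big((p^0)^{-1}Q(f,f)\big)$ uniformly in time while the momentum support is compressed to scale $e^{-\gamma t}$, which requires careful control of the M{\o}ller velocity, the invariant collision measure and the post-collisional momenta (all carrying the anisotropic metric), and keep every constant independent of how large the spatial metric has grown. Step~B is, by contrast, merely the anisotropic version of the computation already shown in the introduction.
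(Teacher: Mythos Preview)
Your proposal is correct and follows essentially the same iteration scheme as the paper: alternate between Proposition~\ref{prop_boltzmann} (Boltzmann on a fixed admissible background) and Proposition~\ref{prop_einstein} (Einstein with a given $f$ obeying the pointwise decay), obtain uniform bounds, and pass to a subsequential limit.

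Two points where your description diverges from the paper's mechanism are worth flagging. First, in Step~A you write the equation as $\partial_t f=(p^0)^{-1}Q(f,f)$; in the paper's variables it is simply $\partial_t f=Q(f,f)$, with the factor $(\det g)^{-1/2}$ and the M{\o}ller velocity $v_M=h\sqrt{s}/(4p^0q^0)$ already inside $Q$. Second, and more substantively, you describe the weight contribution $\partial_t p^0$ as a damping term with the shear piece ``absorbed using $\sigma_{ab}\sigma^{ab}=O(e^{-2\gamma t})$''. The paper does not rely on the \emph{decay} of the shear here at all: it uses the hypothesis $H(t_0)<(7/6)^{1/2}\gamma$ to secure the pointwise bound $F=\sigma_{ab}\sigma^{ab}/(4H^2)<1/4$ for all $t$ (Lemma~\ref{lem_F}), which is exactly the threshold making $\partial_t p^0<0$ (Lemma~\ref{lem_monotone}). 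Monotonicity of $p^0$ is then used only to compare $e^{p^0(t)}\le e^{p^0(\tau)}$ inside the time integral; the closure of the estimate comes from the integrability of $(\det g)^{-1/4}\sim e^{-\frac32\gamma t}$, not from dissipation. Your ``admissible'' condition $H<(7/6)^{1/2}\gamma$ does in fact imply $F<1/4$, so your framework is consistent, but the mechanism you describe (absorbing a decaying shear error into a damping term) obscures why the specific constant $(7/6)^{1/2}$ appears and would not close if one only had the asymptotic rate without the uniform smallness of $F$. The half-power weight loss you mention is used in the paper not to close the boundedness estimate but to show $t$-differentiability of $\|f(t)\|_{k,N}^2$ via $|\partial_t p^0|\le C\langle p_*\rangle$.
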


The proof is given in the last section. From the control of the main quantities one can then obtain estimates of related quantities collected in the corollary below. The Kasner exponents or shape parameters tend all to $1/3$, the deceleration parameter tends to the expected value, and we have a dust-like behaviour at late times. The details of this can be found in the Section \ref{sec_einstein} about basic estimates concerning the Einstein part.

\begin{cor}
Let $g_{ab}$, $k_{ab}$, and $f$ be the solutions obtained in the previous theorem. Then, we also obtain the following estimates:
\begin{align*}
&s_i= \frac13 + O(e^{-\gamma t}),\\
&d= -1 + O(e^{-2 \gamma t}),\\
&\rho=O(e^{-3\gamma t}),\\
&\hat{S}_{ij}=O(e^{-5\gamma t}), \\
&\frac{\hat{S}_{ij}}{\rho}= O(e^{-2\gamma t}).
\end{align*}
\end{cor}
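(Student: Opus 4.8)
The plan is to read off all five estimates from Theorem~\ref{Thm} together with the momentum-space expressions for the matter terms. First, for the shape parameters I would write $s_1,s_2,s_3$ for the eigenvalues of $k^a{}_b/\tr k=k^a{}_b/(3H)$, so that $\sum_i s_i=1$; since $k^a{}_b=\sigma^a{}_b+H\delta^a{}_b$ this gives $s_i=\tfrac13+\hat\sigma_i/(3H)$, where $\hat\sigma_i$ are the shear eigenvalues (equivalently the eigenvalues of $\hat\sigma_{ij}$ in an orthonormal frame). Because $\sigma^{ab}\sigma_{ab}=\sum_i\hat\sigma_i^2=O(e^{-2\gamma t})$ forces $|\hat\sigma_i|=O(e^{-\gamma t})$, and $H\to\gamma>0$ keeps $1/(3H)$ bounded, we obtain $s_i=\tfrac13+O(e^{-\gamma t})$. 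For the deceleration parameter $d=-1-\dot H/H^2$ I would substitute the Bianchi~I analogue of the Friedmann relation used in the introduction, namely $\dot H=\gamma^2-H^2-c\,\sigma^{ab}\sigma_{ab}-\tfrac16(\rho+3P)$ for some constant $c>0$: by Theorem~\ref{Thm} the first term is $(\gamma-H)(\gamma+H)=O(e^{-2\gamma t})$, the shear term is $O(e^{-2\gamma t})$, and $\rho+3P=O(e^{-3\gamma t})$ by the estimates below, so $\dot H=O(e^{-2\gamma t})$; since $H^2=\gamma^2+O(e^{-2\gamma t})$ stays bounded away from zero, $d=-1+O(e^{-2\gamma t})$.

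For the matter quantities I would insert the pointwise bound $\hat f(t,\hat p)\leq C\varepsilon\,(1+e^{2\gamma t}|\hat p|^2)^{-k/2}e^{-p^0/2}$ of Theorem~\ref{Thm} into their definitions: in an orthonormal frame $\rho=\int_{\bbr^3}\hat f\,\hat p^0\,d\hat p$, $\hat T_{ij}=\int_{\bbr^3}\hat f\,\hat p_i\hat p_j\,(\hat p^0)^{-1}d\hat p$, $\hat S_{ij}=\hat T_{ij}-\tfrac13\delta_{ij}\hat T_{kk}$, and $P=\tfrac13\hat T_{ii}$. The substitution $u=e^{\gamma t}\hat p$ — the change of variables already carried out for the isotropic case in the introduction — sends $d\hat p$ to $e^{-3\gamma t}\,du$, turns the weight into $(1+|u|^2)^{-k/2}$, replaces $\hat p^0$ by $\sqrt{1+e^{-2\gamma t}|u|^2}\leq\sqrt{1+|u|^2}$, and extracts a factor $e^{-2\gamma t}u_iu_j$ from $\hat p_i\hat p_j$. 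The leftover integrals $\int_{\bbr^3}(1+|u|^2)^{(1-k)/2}du$ and $\int_{\bbr^3}(1+|u|^2)^{-k/2}|u|^2\,du$ converge exactly because $k>5$, giving $\rho=O(e^{-3\gamma t})$, $P=O(e^{-5\gamma t})$, $\hat T_{ij}=O(e^{-5\gamma t})$, and hence $\hat S_{ij}=O(e^{-5\gamma t})$.

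The one estimate needing more than a substitution is $\hat S_{ij}/\rho=O(e^{-2\gamma t})$, which requires a matching lower bound $\rho\gtrsim e^{-3\gamma t}$. I would get this from conservation of particle number: $1$ is a collision invariant for the Israel-particle kernel, so integrating the Boltzmann equation in $\hat p$ annihilates the collision term and leaves $\dot n+3Hn=0$ with $n=\int_{\bbr^3}\hat f\,d\hat p$, i.e.\ $n\sqrt{\det g}$ is constant. Under the non-emptiness assumption $f(t_0)\not\equiv 0$ (implicit here, exactly as in the isotropic case, since otherwise $\rho\equiv 0$), non-negativity of $f$ from Theorem~\ref{Thm} and non-negativity of the gain term propagate non-vanishing along the characteristics, so $n(t_0)>0$; combining with $\sqrt{\det g}=e^{3\gamma t}(\sqrt{\det\mathcal{G}}+O(e^{-\gamma t}))$ from Theorem~\ref{Thm} yields $n=\Theta(e^{-3\gamma t})$, and $\rho\geq n$ because the particles are massive so that $\hat p^0\geq 1$. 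Then $\hat S_{ij}/\rho=O(e^{-5\gamma t})/\Theta(e^{-3\gamma t})=O(e^{-2\gamma t})$, and likewise $P/\rho=O(e^{-2\gamma t})$, which is the dust-like behaviour.

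I expect this lower bound on $\rho$ to be the only real obstacle: one has to make the particle-number conservation law rigorous in the Bianchi~I orthonormal-frame formulation — so that the momentum-derivative terms in the transport operator integrate to zero and only $3Hn$ survives — and to confirm that $f$ remains non-vanishing for all $t$. Everything else is a direct substitution of the decay rates of Theorem~\ref{Thm} into algebraic identities and into the convergent momentum integrals set up in the introduction; the precise form of the $H$-equation and of the momentum-space measures is recorded in Section~\ref{sec_einstein}.
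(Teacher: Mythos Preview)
Your proposal is correct and follows essentially the same route as the paper, which defers the corollary to Proposition~\ref{prop_einstein}: there $\rho$ and $\hat S_{ij}$ are bounded by the same substitution you use, the estimates for $s_i$ and $d$ are obtained by the same eigenvalue/$\dot H$ arguments (with a reference to \cite{Lee04}), and the lower bound on $\rho$ comes from the particle current $N^\alpha$, whose divergence-freeness gives exactly your $\dot n+3Hn=0$ since $N^0=n$ in an orthonormal frame.

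Two small remarks. First, in the paper $S_{ab}=T_{ab}$, so $\hat S_{ij}$ denotes the full spatial stress $\int\hat f\,\hat p_i\hat p_j/p^0\,d\hat p$, not its trace-free part; your argument actually bounds this quantity directly, so just drop the subtraction of the trace. Second, your anticipated ``obstacle'' is not one: the identity $\dot N^0=-3HN^0$ is simply $\nabla_\alpha N^\alpha=0$, a standard consequence of $\int Q(f,f)\,(p^0)^{-1}dp_*=0$, and the conservation law itself propagates positivity of $n$ without any separate argument about $f$ staying non-vanishing.
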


\section{The Einstein-Boltzmann system with Bianchi I symmetry in the case of Israel particles}
In this paper we are interested in the spacetime with Bianchi type I symmetry. We follow the sign conventions of \cite{Hans} and also refer to this book for background on the Einstein equations with Bianchi symmetry. Concerning the relativistic kinetic theory we refer to \cite{CercignaniKremer}. Using a left-invariant frame $E_a$ with $\xi^a$ its duals, the metric of a Bianchi spacetime can be written as
\[
^{(4)}g=-dt\otimes dt+g,\quad g=g_{ab}\,\xi^a\otimes\xi^b.
\]
In the Bianchi I case the evolution equations of metric $g_{ab}$ and second fundamental form $k_{ab}$ are obtained via the Einstein equations and are as follows (cf. (25.17)--(25.18) of \cite{Hans}):
\begin{align}
\dot{g}_{ab}&=2k_{ab},\label{evolution1}\\
\dot{k}_{ab}&=2k^c_{a}k_{bc}-k\,k_{ab}+S_{ab}+\frac{1}{2}(\rho-S)g_{ab}+\Lambda g_{ab},\label{evolution2}
\end{align}
where a dot denotes derivation with respect to the time, $k=g^{ab}k_{ab}$, $k^c_a=g^{cd}k_{da}$, and $\det g$ is the determinant of the matrix $(g_{ab})$. 
Moreover $\rho$ and $S$ come from the energy-momentum tensor $T_{\alpha\beta}$ which, since we use a kinetic picture, is defined by
\[
T_{\alpha\beta}=(\det g)^{-\frac12}\int_{\bbr^3}f(t,p_*)\frac{p_\alpha p_\beta}{p^0}dp_*,
\]
and $\rho=T_{00}$, $S_{ab}=T_{ab}$, and $S=g^{ab}S_{ab}$. The evolution equations \eqref{evolution1}--\eqref{evolution2} are coupled to the Boltzmann equation:
\begin{align}
\partial_tf&=(\det g)^{-\frac12}\int_{\bbr^3}\int_{\bbs^2}v_M\sigma(h,\theta)\Big(f(p_*')f(q_*')-f(p_*)f(q_*)\Big)d\omega dq_*\label{boltzmann} \\ 
&=:Q(f,f)=Q_+(f,f)-Q_-(f,f). \nonumber
\end{align}
The $Q(f,f)$ is called the collision operator, where $Q_{\pm}(f,f)$ are called the gain and the loss terms, respectively. The M{\o}ller velocity $v_M$ and the relative momentum $h$ are defined for given momenta $p^\alpha$ and $q^\alpha$ by
\[
v_M=\frac{h\sqrt{s}}{4p^0q^0},\quad h=\sqrt{(p_\alpha-q_\alpha)(p^\alpha-q^\alpha)},\quad s=-(p_\alpha+q_\alpha)(p^\alpha+q^\alpha),
\]
where $s$ is called the total energy and satisfy $s=4+h^2$. The post-collision momenta $p_\alpha'$ and $q_\alpha'$ are now given by
\begin{align*}
\left(
\begin{array}{c}
p'^0\\
p'_k
\end{array}
\right)=
\left(
\begin{array}{c}
\displaystyle
p^0+2\bigg(-q^0\frac{n_a e^a_b\omega^b}{\sqrt{s}}+q_ae^a_b\omega^b+\frac{n_ae^a_b\omega^bn_c q^c}{\sqrt{s}(n^0+\sqrt{s})}\bigg)\frac{n_de^d_i\omega^i}{\sqrt{s}}\\
\displaystyle
p_k+2\bigg(-q^0\frac{n_ae^a_b\omega^b}{\sqrt{s}}+q_ae^a_b\omega^b+\frac{n_ae^a_b\omega^bn_cq^c}{\sqrt{s}(n^0+\sqrt{s})}\bigg)
\bigg(g_{ka}e^a_b\omega^b+\frac{n_ae^a_b\omega^bn_k}{\sqrt{s}(n^0+\sqrt{s})}\bigg)
\end{array}
\right),
\end{align*}
and $q_\alpha'=p_\alpha+q_\alpha-p_\alpha'$, where $n^\alpha$ denotes $p^\alpha+q^\alpha$ for simplicity, and $\omega=(\omega^1,\omega^2,\omega^3)\in\bbs^2$ serves as a parameter. The $e^a_b$ are the components of an orthonormal frame, and we recall that these were introduced in Section \ref{sec_notations}. For background on the representation of the post-collision momenta we refer to the Appendix of \cite{LeeNungesser172}. The constraints are given by
\begin{align}
-k_{ab}k^{ab}+k^2&=2\rho+2\Lambda,\label{constraint1}\\
0&=-T_{0a}.\label{constraint2}
\end{align}
In the present paper we assume that the cosmological constant is positive, i.e. $\Lambda>0$. The relevant equations of the Einstein-Boltzmann system with Bianchi I symmetry and $\Lambda>0$ are thus the equations \eqref{evolution1}--\eqref{constraint2}, and we now study the global-in-time properties of it. The quantity $\sigma(h,\theta)$ in the collision operator is called the scattering kernel, where the scattering angle $\theta$ is defined by $\cos\theta=(p^\alpha-q^\alpha)(p'_\alpha-q'_\alpha)/h^2$. In this paper we consider Israel particles:
\[
\sigma(h,\theta)=\frac{4}{h(4+h^2)}\sigma_0(\theta),\label{scat}
\]
where $\sigma_0$ is an arbitrary function of the scattering angle $\theta$. For simplicity we assume that
\[
\sigma_0(\theta)\equiv 1.
\]
Hence, the scattering kernel of our interest is written as $\sigma(h,\theta)=4(hs)^{-1}$.

\section{Basic estimates}
\subsection{Estimates for the Einstein part}\label{sec_einstein}
In this section we enumerate the results which are needed for the Einstein part. It is convenient to introduce the trace-free part of the second fundamental form:
\[
\sigma_{ab}= k_{ab}-\frac13 k g_{ab},
\]
and denote the Hubble variable by
\[
H=\frac13 k.
\]
Apart from estimates for the metric components it is sometimes useful to have estimates for the generalised Kasner exponents or sometimes also called shape parameters, which are defined as the quotient of the eigenvalues of the second fundamental form with respect to the metric and the trace of the second fundamental form. We will denote them by $s_i$. Another useful variable is the deceleration parameter $d$, which is defined as
\[
d=-1  -\frac{\dot{k}}{k^2}.
\]
 
The following is the results for the Einstein equations with a given distribution function $f$ satisfying a certain property. 
\begin{prop}\label{prop_einstein}
Consider a Bianchi I spacetime, which is initially expanding, i.e. $H(t_0)>0$, and let $g_{ab}(t_0)$ and $k_{ab}(t_0)$ be initial data of the evolution equations \eqref{evolution1}--\eqref{evolution2} satisfying the constraints \eqref{constraint1}--\eqref{constraint2}. Suppose that a distribution function $f$ is given and satisfies for some positive $C_f$,
\begin{align}\label{asympfI}
\hat{f}(t,\p)\leq C_f (1+ e^{2\gamma t}|\p|^2)^{-\frac{1}{2}k}e^{-\frac12 p^0},
\end{align}
where $\gamma=(\Lambda/3)^{1/2}$ and $k>5$. Then, the Einstein equations admit global-in-time solutions $g_{ab}$ and $k_{ab}$ which satisfy the following estimates:
\begin{align*}
&H=\gamma+O(e^{-2\gamma t}),\\
&\sigma_{ab}\sigma^{ab}= O(e^{-2\gamma t}),\allowdisplaybreaks\\
&g_{ab}= e^{2\gamma t} \Big(\mathcal{G}_{ab}+ O(e^{-\gamma t})\Big),\\
&g^{ab}= e^{-2\gamma t} \Big(\mathcal{G}^{ab}+ O(e^{-\gamma t})\Big),\allowdisplaybreaks\\
&s_i= \frac13 + O(e^{-\gamma t}),\\
&d= -1 + O(e^{-2 \gamma t}),\allowdisplaybreaks\\
&\rho=O(e^{-3\gamma t}),\\
&\hat{S}_{ij}=O(e^{-5\gamma t}), \\
&\frac{\hat{S}_{ij}}{\rho}= O(e^{-2\gamma t}),
\end{align*}
where $\mathcal{G}_{ab}$ and $\mathcal{G}^{ab}$ are constant matrices.
\end{prop}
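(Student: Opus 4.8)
The plan is to treat the Einstein evolution equations \eqref{evolution1}--\eqref{evolution2} as an ODE system for $g_{ab}$ and $k_{ab}$ in which the matter terms $\rho$, $S_{ab}$, $S$ are \emph{known} source functions controlled entirely through the pointwise bound \eqref{asympfI} on $\hat f$. The first step is therefore to convert \eqref{asympfI} into decay estimates for the matter. Writing the energy--momentum integrals in an orthonormal frame and using $p^0=(1+e^{2\gamma t}|\hat p|^2)^{1/2}$ together with the lower bound $R_{\min}:=(\det g)^{1/6}\geq c\,e^{\gamma t}$ (which we must establish first from the Hamiltonian constraint; see below), the change of variables $\hat p\mapsto e^{\gamma t}\hat p$ gives $\rho=O(e^{-3\gamma t})$ and, because $\hat S_{ij}$ carries two extra factors of momentum divided by $p^0$, $\hat S_{ij}=O(e^{-5\gamma t})$, hence $S=g^{ab}S_{ab}=O(e^{-5\gamma t})$ and $S/\rho=O(e^{-2\gamma t})$. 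These are exactly the last three lines of the proposition, and they also feed every later estimate.

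Next I would run the standard Wald-type argument for $H$. From \eqref{evolution2} and the definitions of $H$ and $\sigma_{ab}$ one gets the Raychaudhuri-type identity $\dot H=-H^2-\tfrac13\sigma_{ab}\sigma^{ab}-\tfrac16(\rho+3S)+\Lambda$ (I would double-check the precise coefficient of $S$ against the conventions of \cite{Hans}), and from the Hamiltonian constraint \eqref{constraint1} one has $\sigma_{ab}\sigma^{ab}=6H^2-2\rho-2\Lambda$, so $H^2\geq\Lambda/3=\gamma^2$ and $H>0$ is preserved; combining the two gives $\dot H\leq\gamma^2-H^2$, i.e. $\tfrac{d}{dt}(H-\gamma)\leq-(H-\gamma)(H+\gamma)\leq-2\gamma(H-\gamma)$, exactly as in the FLRW discussion in the introduction. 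This yields $0\le H-\gamma=O(e^{-2\gamma t})$. Feeding this back, $\det g$ grows like $e^{6\gamma t}$ up to bounded factors, which retroactively justifies the lower bound on $R_{\min}$ used in the matter estimates (a small bootstrap: one first gets $H\geq\gamma$ for free from the constraint, which already gives $R_{\min}\geq c\,e^{\gamma t}$, so there is no circularity). The shear decay $\sigma_{ab}\sigma^{ab}=O(e^{-2\gamma t})$ then follows immediately from $\sigma_{ab}\sigma^{ab}=6(H-\gamma)(H+\gamma)-2\rho$.

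For the metric asymptotics I would introduce the normalized metric $\tilde g_{ab}=e^{-2\gamma t}g_{ab}$; then $\dot{\tilde g}_{ab}=2e^{-2\gamma t}(k_{ab}-\gamma g_{ab})=2e^{-2\gamma t}(\sigma_{ab}+(H-\gamma)g_{ab})=2(e^{-2\gamma t}\sigma_{ab}+(H-\gamma)\tilde g_{ab})$. Since $|e^{-2\gamma t}\sigma_{ab}|\lesssim e^{-2\gamma t}\|\tilde g\|\cdot(\sigma_{cd}\sigma^{cd})^{1/2}=O(e^{-3\gamma t})\|\tilde g\|$ and $|H-\gamma|=O(e^{-2\gamma t})$, a Gr\"onwall argument shows $\tilde g_{ab}$ is bounded above and below (positive definite uniformly), and then $\dot{\tilde g}_{ab}=O(e^{-2\gamma t})$, so $\tilde g_{ab}$ converges to a constant positive-definite matrix $\mathcal G_{ab}$ with $\tilde g_{ab}=\mathcal G_{ab}+O(e^{-2\gamma t})$ — in fact the stated $O(e^{-\gamma t})$ is a weaker, safe claim. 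Inverting gives $g^{ab}=e^{-2\gamma t}(\mathcal G^{ab}+O(e^{-\gamma t}))$ with $\mathcal G^{ab}$ the inverse matrix. Global existence in time is then a consequence of these uniform two-sided bounds on $g_{ab}$ together with the boundedness of $k_{ab}=\sigma_{ab}+Hg_{ab}$: the right-hand sides of \eqref{evolution1}--\eqref{evolution2} stay bounded on any finite interval where the solution exists and the matter source (determined by the a priori bound on $\hat f$) is globally defined, so the solution cannot blow up or degenerate. Finally the shape parameters $s_i$ are the eigenvalues of $k^a_b/k$; writing $k^a_b=Hg^a_b+\sigma^a_b$ and $k=3H$ gives $k^a_b/k=\tfrac13\delta^a_b+\sigma^a_b/(3H)$, and since $H\to\gamma$ while $\sigma^a_b=O(e^{-\gamma t})$ (the square root of the $O(e^{-2\gamma t})$ bound on $\sigma_{ab}\sigma^{ab}$, using the two-sided metric bounds to pass between $\sigma_{ab}$ and $\sigma^a_b$), eigenvalue perturbation gives $s_i=\tfrac13+O(e^{-\gamma t})$; the deceleration parameter $d=-1-\dot k/k^2=-1-\dot H/(3H^2)$ satisfies $\dot H=-\sigma_{ab}\sigma^{ab}/3-(\rho+3S)/6-(H^2-\gamma^2)=O(e^{-2\gamma t})$, hence $d=-1+O(e^{-2\gamma t})$.

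The main obstacle I anticipate is not any single estimate but the bootstrap structure: the matter decay rates require the lower bound $R_{\min}\gtrsim e^{\gamma t}$, that bound comes from $H\geq\gamma$, $H\geq\gamma$ comes from the constraint which is propagated by the evolution, and the sharp $H-\gamma=O(e^{-2\gamma t})$ then needs $\rho+3S\geq0$ (energy condition — true here since $\rho,S_{ab}$ come from a nonnegative $f$) to close the differential inequality. One must present these in an order that avoids circular reasoning; the clean way is: (i) constraint $\Rightarrow H\geq\gamma>0$ hence $R_{\min}\geq c e^{\gamma t}$; (ii) matter bounds from \eqref{asympfI}; (iii) differential inequality $\Rightarrow H-\gamma=O(e^{-2\gamma t})$; (iv) everything else. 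A secondary technical point is keeping careful track of whether one controls $\sigma_{ab}$, $\sigma^{ab}$, or $\sigma^a_b$ — these differ by powers of $e^{2\gamma t}$, so the two-sided bounds on the normalized metric $\tilde g_{ab}$ must be in hand before quoting, e.g., $\sigma^a_b=O(e^{-\gamma t})$ from $\sigma_{ab}\sigma^{ab}=O(e^{-2\gamma t})$.
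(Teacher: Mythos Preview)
Your approach is essentially the paper's: the paper estimates $\rho$ and $\hat S_{ij}$ directly from \eqref{asympfI} by the same change of variables you describe, and then defers the estimates for $H$, $\sigma_{ab}\sigma^{ab}$, $g_{ab}$, $g^{ab}$, $s_i$, $d$ to \cite{Lee04}; you have in effect reproduced the content of that reference (Wald-type differential inequality for $H$, shear from the constraint, convergence of $e^{-2\gamma t}g_{ab}$ via Gr\"onwall). The bootstrap ordering you spell out is also correct and matches what the paper does implicitly.

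There is one genuine gap. You assert ``$S/\rho=O(e^{-2\gamma t})$'' as though it follows from the upper bounds $\rho=O(e^{-3\gamma t})$ and $\hat S_{ij}=O(e^{-5\gamma t})$. It does not: a ratio estimate requires a \emph{lower} bound on $\rho$, and nothing in \eqref{asympfI} gives one (it is only an upper bound on $\hat f$). The paper supplies this missing ingredient by a different mechanism, namely the particle number current $N^\alpha=(\det g)^{-1/2}\int f\,(p^\alpha/p^0)\,dp_*$, which is divergence-free, so that $\dot N^0=-3HN^0$ and hence $N^0\geq C e^{-3\gamma t}$. Since $p^0\geq 1$ implies $\rho\geq N^0$, one obtains $\rho\geq C e^{-3\gamma t}$, and then $\hat S_{ij}/\rho=O(e^{-2\gamma t})$ follows. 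Without this conservation-law argument the last estimate of the proposition is unproved.

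A minor slip: in an orthonormal frame $p^0=(1+|\hat p|^2)^{1/2}$, not $(1+e^{2\gamma t}|\hat p|^2)^{1/2}$; the factor $e^{2\gamma t}$ lives only in the weight $(1+e^{2\gamma t}|\hat p|^2)^{-k/2}$ from \eqref{asympfI}. This does not affect your conclusions but should be corrected.
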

\begin{proof}
Given the distribution function $f$ satisfying \eqref{asympfI} we estimate $\rho$ as
\begin{align*}
\rho=\int \hat{f}(t,\p)p^0d\p \leq C_f \int (1+e^{2\gamma t} \vert\hat{p} \vert^2)^{-\frac12 k} e^{-\frac12 p^0}p^0 d\p \leq CC_f e^{-3\gamma t}.
\end{align*}
To estimate $\hat{S}_{ij}$ we use \eqref{asympfI} with the fact that $p^0\geq1$ for massive particles:
\begin{align*}
\hat{S}_{ij}&= \int \hat{f}(t,\p) \frac{\hat{p}_i \hat{p}_j}{p^0} d\p \leq C C_f\int   (1+e^{2\gamma t} \vert\hat{p} \vert^2)^{-\frac12 k} e^{-\frac12 p^0} \vert \hat{p} \vert^2 d\p \\
&\leq C C_fe^{-5\gamma t} \int   (1+|z|^2)^{-\frac12 k} e^{-\frac12 \sqrt{1+ e^{-2\gamma t}|z|^2}} \vert z \vert^2 dz \leq C C_f  e^{-5\gamma t},
\end{align*}
where we used $k>5$ for the last inequality. Global-in-time existence of solutions $g_{ab}$ and $k_{ab}$ is easily obtained by standard arguments. The estimates of $H$, $\sigma_{ab}\sigma^{ab}$, $g_{ab}$, $g^{ab}$, $s_i$, and $d$ are also easily obtained by the same arguments as in \cite{Lee04}. In fact the results of \cite{Lee04} go through since the Vlasov equation is not used at all. So we omit the details and only refer to the proofs of Propositions 3.1 and 3.5--3.7 in \cite{Lee04}. To obtain the estimate of the quotient of $\hat{S}_{ij}$ and $\rho$, we note that the energy density is bounded from below by the zeroth component of the particle current density $N^\alpha$, which is defined as
\begin{align*}
N^\alpha=(\det g)^{-\frac12}\int_{\bbr^3}f(t,p_*)\frac{p^\alpha}{p^0}dp_*.
\end{align*}
This quantity is divergence-free and as a result 
\begin{align*}
\dot{N}^0=-3HN^0.
\end{align*}
Using the estimate for $H$ we obtain that
\begin{align*}
\rho > N^0 > C e^{-3\gamma t},
\end{align*}
and this proves the estimate of $\hat{S}_{ij}/\rho$. 
\end{proof}

Here, we assumed that a distribution function is given and obtained global solutions to the Einstein equations. This result will be used in Section \ref{sec_boltzmann} to obtain solutions to the Boltzmann equation for given $g_{ab}$ and $k_{ab}$. We will define an iteration for the coupled equations. To obtain a solution from the iteration we need boundedness of the following quantity:
\[
F=\frac{\sigma_{ab}\sigma^{ab}}{4H^2},
\]
which is a scaled version of the shear.
\begin{lemma}\label{lem_F}
Let $g_{ab}$ and $k_{ab}$ be the solutions obtained in Proposition \ref{prop_einstein}. If initial data is given as $H(t_0)<(7/6)^{1/2}\gamma$, then $F(t)<1/4$ for all $t\geq t_0$.
\end{lemma}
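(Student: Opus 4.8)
The plan is to rewrite the scaled shear $F$ entirely in terms of the Hubble variable $H$ by means of the Hamiltonian constraint \eqref{constraint1}, so that the statement reduces to showing that $H$ never exceeds its initial value. Decomposing $k_{ab}=\sigma_{ab}+Hg_{ab}$ and using $k=3H$, one has $k_{ab}k^{ab}=\sigma_{ab}\sigma^{ab}+3H^2$ and $k^2=9H^2$, so \eqref{constraint1} takes the form
\begin{align*}
3H^2=\rho+\Lambda+\tfrac12\,\sigma_{ab}\sigma^{ab}.
\end{align*}
In particular $\sigma_{ab}\sigma^{ab}=6H^2-2\rho-2\Lambda$, hence
\begin{align*}
F=\frac{\sigma_{ab}\sigma^{ab}}{4H^2}=\frac32-\frac{\rho+\Lambda}{2H^2}\leq\frac32-\frac{\Lambda}{2H^2}=\frac32-\frac{3\gamma^2}{2H^2},
\end{align*}
where $\rho\geq0$ because the distribution function is nonnegative (cf.\ the bound $\rho>N^0>0$ in the proof of Proposition \ref{prop_einstein}) and $\Lambda=3\gamma^2$.

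Next I would show that $H$ is non-increasing. The constraint above already gives $H^2\geq\gamma^2$, hence $H\geq\gamma$ for all $t$ since $H>0$. Taking the trace of \eqref{evolution2}, using $\dot g^{ab}=-2k^{ab}$ and $g^{ab}S_{ab}=S$, one obtains $\dot k=-k^2+\tfrac32\rho-\tfrac12 S+3\Lambda$, that is $3\dot H=-9H^2+\tfrac32\rho-\tfrac12 S+3\Lambda$; substituting $9H^2=3\rho+3\Lambda+\tfrac32\sigma_{ab}\sigma^{ab}$ from the constraint yields
\begin{align*}
\dot H=-\tfrac12\rho-\tfrac12\,\sigma_{ab}\sigma^{ab}-\tfrac16 S\leq0,
\end{align*}
since $\rho\geq0$, $\sigma_{ab}\sigma^{ab}\geq0$, and $S=(\det g)^{-1/2}\int f\,g^{ab}p_ap_b/p^0\,dp_*\geq0$ (as $g^{ab}$ is positive definite and $f\geq0$). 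Therefore $H(t)\leq H(t_0)$ for every $t\geq t_0$.

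Combining the two steps, the hypothesis $H(t_0)<(7/6)^{1/2}\gamma$ gives $H(t)^2\leq H(t_0)^2<\tfrac76\gamma^2$ for all $t\geq t_0$, so that $\dfrac{3\gamma^2}{2H(t)^2}>\dfrac{9}{7}$ and
\begin{align*}
F(t)\leq\frac32-\frac{3\gamma^2}{2H(t)^2}<\frac32-\frac97=\frac{3}{14}<\frac14 .
\end{align*}

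I expect the only step needing genuine care to be the trace computation leading to $\dot H\leq0$ --- in particular the contribution $\dot g^{ab}k_{ab}=-2k^{ab}k_{ab}$ and the use of the constraint to eliminate the $H^2$ term --- together with the sign observations for $\rho$ and $S$; the rest is elementary algebra. Note that the argument in fact delivers the sharper bound $F<3/14$, the weaker conclusion $F<1/4$ being all that will be used later.
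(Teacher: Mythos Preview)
Your argument is correct and follows essentially the same route as the paper: both proofs use the Hamiltonian constraint to write $\sigma_{ab}\sigma^{ab}=6H^2-2\rho-2\Lambda$, derive $\dot H=-\tfrac12\sigma_{ab}\sigma^{ab}-\tfrac12\rho-\tfrac16 S\le 0$ from the traced evolution equation, and then feed the monotonicity $H(t)\le H(t_0)$ into the resulting bound on $F$. The only difference is cosmetic: the paper replaces the denominator $H^2$ by the cruder lower bound $\gamma^2$ before using $H(t)\le H(t_0)$, which lands exactly on $F<\tfrac14$, whereas you keep $H^2$ in the denominator and obtain the sharper $F<\tfrac{3}{14}$.
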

\begin{proof}
Note that $H$ satisfies the following differential equation:
\[
\dot{H}=-3H^2-\frac{1}{6}S+\frac{1}{2}\rho+\Lambda,
\]
and the constraint equation \eqref{constraint1} is written as
\begin{align}\label{constraint3}
k^2=\frac32\sigma_{ab}\sigma^{ab}+3\rho+3\Lambda.
\end{align}
Since $k=3H$, the differential equation for $H$ is now written as
\[
\dot{H}=-\frac12 \sigma_{ab}\sigma^{ab} -\frac{1}{6}S-\frac12\rho\leq 0,
\]
and this shows that $H$ is decreasing. Together with the previous results we can see that $H$ is monotonically decreasing to the constant $\gamma$, in particular $H\geq \gamma$. We use again the constraint equation \eqref{constraint3} to obtain the following inequality:
\[
\sigma_{ab}\sigma^{ab}=6H^2-2\rho-6\gamma^2\leq 6H^2-6\gamma^2.
\]
Then, the quantity $F$ is estimated as follows:
\[
F=\frac{\sigma_{ab}\sigma^{ab}}{4H^2}\leq\frac{3H^2-3\gamma^2}{2\gamma^2}\leq \frac32\bigg(\frac{H^2(t_0)}{\gamma^2}-1\bigg)<\frac14,
\]
and this completes the proof.
\end{proof}

Finally let us note that with the estimate of $H$ the determinant of the metric is estimated as follows. Since
\begin{align*}
\frac{d (\log \det g)}{dt}= 6H,
\end{align*}
we obtain
\begin{align*}
\det g= O( e^{6\gamma t}).
\end{align*}
This means that we can go from our frame to an orthonormal frame and back with transformation matrices which satisfy $|e^a_b|\leq C e^{\gamma t}$ and $|(e^{-1})^a_b|\leq Ce^{-\gamma t}$. For details on this we refer to \cite{LeeNungesser171}.

\subsection{Basic estimates for momentum variables}\label{sec_basic2}
In this part we collect basic lemmas for the Boltzmann equation. 
\begin{lemma}\label{lem_basic}
The following estimates hold:
\[
s=4+h^2,\quad \frac{|\p-\q|}{\sqrt{p^0q^0}}\leq h\leq |\p-\q|,\quad
s\leq 4p^0q^0,\quad
|\p|\leq p^0.
\]
\end{lemma}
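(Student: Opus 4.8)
The plan is to reduce every claim to the mass-shell relation and elementary algebra. Since $h$ and $s$ are defined by full contractions with the spacetime metric they are frame-independent scalars and may therefore be evaluated in the orthonormal frame $e_\mu$, where the metric is $\eta_{\mu\nu}$ and, for a massive particle with mass normalized to one, the mass-shell relation reads $\eta_{\mu\nu}\hat p^\mu\hat p^\nu=-1$, i.e. $(\hat p^0)^2=1+|\hat p|^2$ with $\hat p^0=p^0$ (the time components agree since $e_0=\partial_t$). This already gives $p^0\ge 1$ and the last of the four inequalities, $|\hat p|\le p^0$.

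Next I would record the algebraic consequences of $p_\alpha p^\alpha=q_\alpha q^\alpha=-1$: expanding the definitions gives $s=-(p_\alpha+q_\alpha)(p^\alpha+q^\alpha)=2-2p_\alpha q^\alpha$ and $h^2=(p_\alpha-q_\alpha)(p^\alpha-q^\alpha)=-2-2p_\alpha q^\alpha$, so that subtraction yields $s-h^2=4$. In the orthonormal frame $p_\alpha q^\alpha=-p^0q^0+\hat p\cdot\hat q$, hence $s=2+2p^0q^0-2\hat p\cdot\hat q$ and $h^2=2(p^0q^0-1-\hat p\cdot\hat q)$; alternatively, from the mass shell, $h^2=|\hat p-\hat q|^2-(\hat p^0-\hat q^0)^2$, from which the upper bound $h\le|\hat p-\hat q|$ is immediate.

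For $s\le 4p^0q^0$ the identity for $s$ above reduces the claim to $p^0q^0+\hat p\cdot\hat q\ge 1$; since $\hat p\cdot\hat q\ge-|\hat p||\hat q|$ by Cauchy--Schwarz it suffices that $p^0q^0\ge 1+|\hat p||\hat q|$, which after squaring the two nonnegative sides is exactly $(|\hat p|-|\hat q|)^2\ge 0$. The remaining and only slightly delicate step is the lower bound $h\ge|\hat p-\hat q|/\sqrt{p^0q^0}$: here I would multiply through and simplify $p^0q^0h^2-|\hat p-\hat q|^2$, using $(\hat p^0)^2-1=|\hat p|^2$ and $(\hat q^0)^2-1=|\hat q|^2$, to the form $(p^0q^0-1)^2-2(\hat p\cdot\hat q)(p^0q^0-1)+|\hat p|^2|\hat q|^2$; completing the square in $\hat p\cdot\hat q$ rewrites this as $(p^0q^0-1-\hat p\cdot\hat q)^2+\big(|\hat p|^2|\hat q|^2-(\hat p\cdot\hat q)^2\big)\ge 0$, again by Cauchy--Schwarz, and dividing back by $p^0q^0\ge 1$ gives the bound. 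The main obstacle is this last simplification: it must be carried out carefully so that the final expression is manifestly a sum of squares, but it is otherwise routine, and no other step requires more than a line or two of computation.
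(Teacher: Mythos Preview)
Your argument is correct in every step; the only place one might worry, the lower bound $h\ge|\hat p-\hat q|/\sqrt{p^0q^0}$, reduces exactly as you say to $(p^0q^0-1-\hat p\cdot\hat q)^2+\big(|\hat p|^2|\hat q|^2-(\hat p\cdot\hat q)^2\big)\ge 0$, and the intermediate simplification checks out using $(p^0q^0)^2=(1+|\hat p|^2)(1+|\hat q|^2)$.

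The paper itself does not prove the lemma at all: it simply cites \cite{GuoStrain12,LeeNungesser171} and moves on. Your proposal therefore supplies a complete, self-contained elementary proof where the paper defers to the literature. The advantage of your write-up is that a reader need not chase references for identities that, as you demonstrate, follow in a few lines from the mass-shell relation and Cauchy--Schwarz; the advantage of the paper's choice is brevity, since these inequalities are standard in the relativistic kinetic theory literature.
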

\begin{proof}
We refer to \cite{GuoStrain12,LeeNungesser171} for the proofs. Note that $\p^i$ is the $i$-th component of the momentum in an orthonormal frame.
\end{proof}

\begin{lemma}\label{lem_int}
For any integer $m$, we have
\[
\int_{\bbr^3}(p^0)^me^{-p^0}dp_*\leq C(\det g)^{\frac12},
\]
where the constant $C$ does not depend on $t$.
\end{lemma}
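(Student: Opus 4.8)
The plan is to reduce the integral to an explicit dimensionless one by exploiting the mass-shell relation together with a linear change of momentum variables. First I would recall that for unit-mass particles $p^0$ is not an independent variable: it is fixed by the normalization $g_{\alpha\beta}p^\alpha p^\beta=-1$, which in a Bianchi I spacetime (whose metric has the block form $-dt\otimes dt+g_{ab}\,\xi^a\otimes\xi^b$) becomes $(p^0)^2=1+g^{ab}p_ap_b$. Hence the integrand depends on $p_*$ only through the positive definite quadratic form $g^{ab}p_ap_b$, and $(p^0)^m e^{-p^0}=(1+g^{ab}p_ap_b)^{m/2}\exp\big(-\sqrt{1+g^{ab}p_ap_b}\big)$.

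Next I would diagonalize this quadratic form. Since $(g^{ab})$ is symmetric and positive definite there is a real matrix $L$ with $L^{T}L=(g^{ab})$ and $|\det L|=(\det g^{ab})^{1/2}=(\det g)^{-1/2}$; the substitution $y=Lp_*$ turns $g^{ab}p_ap_b$ into $|y|^2$ and produces the Jacobian factor $dp_*=(\det g)^{1/2}\,dy$. After this change of variables the integral factorizes as $(\det g)^{1/2}$ times
\[
\int_{\bbr^3}(1+|y|^2)^{m/2}e^{-\sqrt{1+|y|^2}}\,dy,
\]
which no longer involves the metric and is therefore a constant depending only on $m$.

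Finally I would verify that this remaining integral is finite for every integer $m$: for $m\geq 0$ the exponential weight $e^{-\sqrt{1+|y|^2}}$ decays faster at infinity than the polynomial $(1+|y|^2)^{m/2}$ grows, while for $m<0$ one additionally uses $1+|y|^2\geq 1$ so there is no singularity at the origin. Calling its value $C$ yields the claimed bound with a constant independent of $t$.

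There is no genuinely hard step here; the only things to be careful about are the bookkeeping of the Jacobian — diagonalizing $g^{ab}$ (not $g_{ab}$) is what produces the factor $(\det g)^{+1/2}$ rather than $(\det g)^{-1/2}$ — and the observation that the estimate is uniform in $t$ precisely because all $t$-dependence has been isolated into the prefactor $(\det g)^{1/2}$. One could equivalently phrase the substitution as passing to the orthonormal-frame momentum, where $p^0=\sqrt{1+|\hat p|^2}$ and $dp_*=(\det g)^{1/2}\,d\hat p$, but the direct diagonalization argument is self-contained and needs nothing beyond Lemma \ref{lem_basic} and linear algebra.
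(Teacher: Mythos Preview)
Your argument is correct and coincides with the paper's: the paper simply changes to the orthonormal-frame momentum, writing $\int_{\bbr^3}(p^0)^m e^{-p^0}\,dp_*=(\det g)^{1/2}\int_{\bbr^3}(p^0)^m e^{-p^0}\,d\hat p$ with $p^0=\sqrt{1+|\hat p|^2}$, which is exactly the substitution you describe (and explicitly identify as equivalent). There is nothing to add.
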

\begin{proof}
By a direct calculation we have
\begin{align*}
\int_{\bbr^3}(p^0)^me^{-p^0}dp_*=(\det g)^{\frac12}\int_{\bbr^3} (p^0)^me^{-p^0}d\p\leq C(\det g)^{\frac12},
\end{align*}
where we used the representation $p^0=\sqrt{1+|\p|^2}$ in an orthonormal frame.
\end{proof}

\begin{lemma}\label{lem_pp'q'}
Let a spatial metric $g$ satisfy the assumption {\sf (A)} of Section \ref{sec_boltzmann}. Then, the following estimate holds:
\[
\langle p_*\rangle\leq C\langle p_*'\rangle\langle q_*'\rangle,
\]
where the constant $C$ does not depend on the metric.
\end{lemma}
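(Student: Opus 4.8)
The plan is to obtain the bound from conservation of energy at the collision, converting the resulting inequality into one for the weights $\langle\cdot\rangle$ by means of the bounds on the spatial metric contained in assumption {\sf (A)}. First, recall that all four momenta are those of massive particles on the mass shell, so $p^0=\sqrt{1+|\p|^2}\geq1$ in an orthonormal frame, equivalently $(p^0)^2=1+g^{ab}p_ap_b$, and the same holds for $q$, $p'$, $q'$ (cf. Lemma \ref{lem_basic} and the construction of the post-collision momenta in the Appendix of \cite{LeeNungesser172}). Since the representation of the post-collision momenta gives $q'_\alpha=p_\alpha+q_\alpha-p'_\alpha$, its time component is conservation of energy, $p^0+q^0={p'}^0+{q'}^0$. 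Using $q^0\geq1$ I would then write $p^0\leq{p'}^0+{q'}^0-1\leq{p'}^0{q'}^0$, the last step because $({p'}^0-1)({q'}^0-1)\geq0$; squaring and substituting the mass-shell identities, a one-line rearrangement (using $({q'}^0)^2=1+g^{cd}q'_cq'_d$) gives
\[
g^{ab}p_ap_b=(p^0)^2-1\leq ({p'}^0)^2({q'}^0)^2-1= g^{ab}p'_ap'_b+g^{ab}q'_aq'_b+(g^{ab}p'_ap'_b)(g^{cd}q'_cq'_d).
\]
Spatial momentum conservation is not needed for this.

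Next I would pass from these $g$-quadratic forms to the Euclidean norms appearing in $\langle\cdot\rangle$. By the estimate $g^{ab}=e^{-2\gamma t}(\mathcal{G}^{ab}+O(e^{-\gamma t}))$ of Proposition \ref{prop_einstein}, with $\mathcal{G}^{ab}$ positive definite — this being exactly the kind of control encoded in assumption {\sf (A)} — there is a constant $\kappa\geq1$, depending only on the constants quantified in {\sf (A)} and on $t_0$, such that $\kappa^{-1}e^{-2\gamma t}|\xi|^2\leq g^{ab}(t)\xi_a\xi_b\leq\kappa e^{-2\gamma t}|\xi|^2$ for every $\xi\in\bbr^3$ and $t\geq t_0$. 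Applying the lower bound at $p_*$ and the upper bound at $p_*'$ and $q_*'$ in the displayed inequality, the factors $e^{\pm2\gamma t}$ cancel in the two linear terms and the quadratic term retains a factor $e^{-2\gamma t}$; since $t\geq t_0$ this yields $|p_*|^2\leq C_1(|p_*'|^2+|q_*'|^2+|p_*'|^2|q_*'|^2)$ with $C_1$ depending only on $\kappa$ and $t_0$. Adding $1$ and bounding the right-hand side by $(1+C_1)(1+|p_*'|^2)(1+|q_*'|^2)$, then taking square roots, gives $\langle p_*\rangle\leq C\langle p_*'\rangle\langle q_*'\rangle$ with $C=\sqrt{1+C_1}$ independent of the particular metric in the class {\sf (A)}.

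The hard part — really the only subtlety — is this last cancellation of the conformal factor. One cannot simply compare $\langle p_*\rangle$ with $p^0$: since $g^{ab}\to0$ one has $p^0\to1$ while $\langle p_*\rangle$ need not, so their ratio is unbounded in $t$ and no $t$-independent comparison is available. Routing the argument through the mass-shell identity is what makes the degrading power $e^{-2\gamma t}$ in the lower bound for $g^{ab}$ at $p_*$ cancel against the same power in the upper bound at $p_*'$ and $q_*'$, with the surplus factor in the quadratic term carrying the decaying exponent and hence being harmless on $[t_0,\infty)$. Everything else is a routine estimate.
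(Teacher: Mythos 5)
Your proof is correct and takes essentially the same route as the paper: the paper uses assumption {\sf (A)} to trade $|p_*|,|p_*'|,|q_*'|$ for the orthonormal-frame quantities $e^{\gamma t}|\hat{p}|$, etc., and then cites Lemma 4 of \cite{LeeNungesser172}, whose content is precisely the energy-conservation estimate you carry out explicitly ($p^0\leq p'^0+q'^0-1\leq p'^0q'^0$ together with the mass-shell identities). The only cosmetic differences are that your appeal to Proposition \ref{prop_einstein} is unnecessary (assumption {\sf (A)} already states the two-sided bound you use, with $\kappa=c_0$), and that retaining the product term makes your constant depend on $t_0$ through $e^{-2\gamma t}\leq e^{-2\gamma t_0}$, whereas the slightly sharper bound $(p^0)^2-1\leq(p'^0+q'^0-1)^2-1\leq 2\bigl(|\hat{p}'|^2+|\hat{q}'|^2\bigr)$ would avoid this; both versions are harmless for the lemma as stated.
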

\begin{proof}
Since $g^{ab}p_ap_b=|\p|^2$, the assumption {\sf (A)} implies that
\[
\frac{1}{c_0}e^{2\gamma t}|\p|^2\leq |p_*|^2\leq c_0 e^{2\gamma t}|\p|^2.
\]
Then, we have in an orthonormal frame,
\[
\frac{1+|p_*|^2}{(1+|p_*'|^2)(1+|q_*'|^2)}\leq \frac{C(1+e^{2\gamma t}|\p|^2)}{(1+e^{2\gamma t}|\p'|^2)(1+e^{2\gamma t}|\q'|^2)}\leq \frac{C(1+e^{2\gamma t}|\p|^2)}{1+e^{2\gamma t}(|\p'|^2+|\q'|^2)},
\]
where the constant $C$ depends only on the constant $c_0$ given in {\sf (A)}. We now follow the proof of Lemma 4 of \cite{LeeNungesser172}, where the factor $e^{2\gamma t}$ is replaced by $R^2$, and obtain the desired result. We refer to \cite{LeeNungesser172} for more details.
\end{proof}

\begin{lemma}\label{lem_partial_1/p0}
For a multi-index $A$, there exist polynomials ${\mathcal P}$ and ${\mathcal P}_i$ such that
\begin{align*}
&\partial^A \bigg[\frac{1}{p^0}\bigg]=\frac{e^A_B}{(p^0)^{|A|+1}}{\mathcal P}\bigg(\frac{\p}{p^0}\bigg),\\
&\partial^A\bigg[\frac{1}{\sqrt{s}}\bigg]=\frac{e^A_C}{\sqrt{s}}\sum_{i=0}^{|A|}\bigg(\frac{q^0}{s}\bigg)^i \bigg(\frac{1}{p^0}\bigg)^{|A|-i}{\mathcal P}_i\bigg(\frac{\p}{p^0},\frac{\q}{q^0}\bigg),
\end{align*}
where the multi-indices $B$ and $C$ are summed with the polynomials.
\end{lemma}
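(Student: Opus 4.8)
The plan is to work in the orthonormal frame and induct on $|A|$. Because the frame coefficients $e^a_b$ depend only on $t$, the relation $\partial^A=e^A_B\hat{\partial}^B$ moves all the (momentum-independent) coefficients $e^A_B$ outside the momentum derivatives; hence it suffices to prove the two identities for the hatted operators $\hat{\partial}^B$ and then restore the $e^A_B$ — together with the implicit sum over the dummy index, which I absorb into the polynomials — noting $|B|=|C|=|A|$. Throughout, $q$ (hence $q^0$ and $\q$) is held fixed. In the orthonormal frame one has $p^0=\sqrt{1+|\p|^2}$ and $s=(p^0+q^0)^2-|\p_*+\q_*|^2$ (both everywhere positive and smooth in $\p$), and the only differentiation rules needed are $\hat{\partial}^c p^0=\p^c/p^0$ and $\hat{\partial}^c s=2q^0\big(\p^c/p^0-\q^c/q^0\big)$; iterating the first gives $\hat{\partial}^c[\p^i/p^0]=(p^0)^{-1}\big(\delta^{ci}-\p^i\p^c/(p^0)^2\big)$, which is $(p^0)^{-1}$ times a polynomial in $\p/p^0$. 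The point underlying everything is that differentiating a polynomial in the normalized components $\p/p^0$ (and $\q/q^0$) never leaves the ring generated by those components and $1/p^0$.

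For the first identity I would show by induction on $m=|B|$ that $\hat{\partial}^B[1/p^0]=(p^0)^{-m-1}{\mathcal P}(\p/p^0)$ for some polynomial ${\mathcal P}$. The case $m=0$ is trivial. For the step, applying $\hat{\partial}^c$ to $(p^0)^{-m-1}{\mathcal P}(\p/p^0)$ gives two terms: differentiating the power of $p^0$ produces $\hat{\partial}^c p^0=\p^c/p^0$ and lowers the exponent by one, i.e. $(p^0)^{-m-2}$ times a polynomial in $\p/p^0$; differentiating the argument via the chain rule and the rule for $\hat{\partial}^c[\p^i/p^0]$ again gives $(p^0)^{-m-2}$ times a polynomial. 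Both are of the claimed form at level $m+1$.

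For the second identity the crux is to guess the correct normal form and verify it is stable under $\hat{\partial}^c$: I claim $\hat{\partial}^B[1/\sqrt{s}]$ is a sum over $0\le i\le m$ of $(1/\sqrt{s})\,(q^0/s)^i\,(1/p^0)^{m-i}$ times a polynomial in $(\p/p^0,\q/q^0)$. Differentiating the $i$-th summand produces three contributions. Hitting the $s$-powers uses $\hat{\partial}^c s=2q^0(\p^c/p^0-\q^c/q^0)$: this raises the power of $1/s$ by one and supplies the extra factor $q^0$, moving the term to index $i+1$ at level $m+1$, while the exponent of $1/p^0$ becomes $m-i=(m+1)-(i+1)$. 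Hitting $(1/p^0)^{m-i}$ supplies a factor $\p^c/p^0$ and one further power of $1/p^0$, keeping index $i$ at level $m+1$. Hitting the polynomial supplies, via the chain rule, one further power of $1/p^0$, again keeping index $i$. All three land in the same family with $0\le i\le m+1$, which closes the induction; restoring $e^A_C$ yields the stated formula.

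I do not expect a genuine difficulty. The one place that demands care is the exponent bookkeeping in the second identity: one must recognize that the natural grading is by $i$, the number of times $s$ has been differentiated — exactly what the factor $(q^0/s)^i$ records — and check that the companion exponent $(1/p^0)^{|A|-i}$ is then forced by counting the total order of differentiation. Stating at the outset that differentiation preserves the relevant ring (so that the ``polynomial'' factors really stay polynomial) is what makes the two normal forms close under $\hat{\partial}^c$.
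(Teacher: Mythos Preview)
Your proposal is correct and follows essentially the same route as the paper: reduce to the orthonormal frame via $\partial^A=e^A_B\hat{\partial}^B$ (using that $e^a_b$ is momentum-independent) and then establish the hatted identities by induction on $|B|$. The paper simply defers the orthonormal-frame induction to Lemmas~5 and~6 of \cite{LeeNungesser172}, whereas you write it out explicitly; your bookkeeping of the $(q^0/s)^i(1/p^0)^{|A|-i}$ grading is exactly the content of those lemmas.
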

\begin{proof}
Note that the partial derivatives with respect to $p_a$ and $\p_a$ are related to each other as $\partial^a=e^a_b \hat{\partial}^b$, and for high order derivatives we have
\begin{align}
\partial^A=e^A_B\hat{\partial}^B,\label{relation}
\end{align}
where $e^A_B=e^{a_1}_{b_1}\cdots e^{a_m}_{b_m}$ for $A=(a_1,\cdots,a_m)$ and $B=(b_1,\cdots,b_m)$. From the proof of Lemma 5 of \cite{LeeNungesser172} we have in an orthonormal frame
\begin{align*}
\hat{\partial}^A \bigg[\frac{1}{p^0}\bigg]=\frac{1}{(p^0)^{|A|+1}}{\mathcal P}\bigg(\frac{\p}{p^0}\bigg),
\end{align*}
and the first estimate of the lemma follows from the relation \eqref{relation} with the fact that $|A|=|B|$. The second estimate is also obtained by the same argument, and this completes the proof. For more details we refer to Lemma 5 and 6 of \cite{LeeNungesser172}.
\end{proof}

\begin{lemma}\label{lem_partial_1/n0+s}
For a multi-index $A\neq 0$, there exist polynomials ${\mathcal P}_i$ such that
\[
\partial^A\bigg[\frac{1}{n^0+\sqrt{s}}\bigg]=e^A_B\sum_{i=1}^{|A|}\frac{(q^0)^{|A|}}{(n^0+\sqrt{s})^{i+1}} {\mathcal P}_i\bigg(\frac{\p}{p^0},\frac{\q}{q^0}, \frac{1}{p^0}, \frac{1}{q^0},\frac{1}{\sqrt{s}}\bigg),
\]
where the multi-index $B$ is summed with the polynomials.
\end{lemma}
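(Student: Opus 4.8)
The plan is to proceed exactly as in the proof of Lemma \ref{lem_partial_1/p0}, reducing everything to the orthonormal-frame derivatives $\hat{\partial}^B$ via the relation \eqref{relation}, $\partial^A = e^A_B\hat{\partial}^B$, and then arguing by induction on $|A|$. First I would pass to the orthonormal frame: since $|A|=|B|$, it suffices to show that there are polynomials ${\mathcal P}_i$ with
\[
\hat{\partial}^A\bigg[\frac{1}{n^0+\sqrt{s}}\bigg]=\sum_{i=1}^{|A|}\frac{(q^0)^{|A|}}{(n^0+\sqrt{s})^{i+1}}{\mathcal P}_i\bigg(\frac{\p}{p^0},\frac{\q}{q^0},\frac{1}{p^0},\frac{1}{q^0},\frac{1}{\sqrt{s}}\bigg),
\]
the frame-index version of the claim. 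Here I recall that $n^0=p^0+q^0$, that $\hat{\partial}^b p^0 = \p^b/p^0$ (so $\hat\partial$ acting on $p^0$ contributes the argument $\p/p^0$), and that by Lemma \ref{lem_partial_1/p0} the derivatives of $1/\sqrt{s}$ produce exactly factors of $q^0/s$, $1/p^0$ and polynomials in $\p/p^0,\q/q^0$; differentiating $\sqrt{s}$ itself gives $\hat\partial^b\sqrt{s}$, which by $s=4+h^2$ and $h\le|\p-\q|$ is again controlled by such polynomial expressions times $1/\sqrt{s}$ and powers of $1/p^0$, $q^0$.

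For the base case $|A|=1$, a direct computation gives
\[
\hat{\partial}^b\bigg[\frac{1}{n^0+\sqrt{s}}\bigg]=-\frac{\hat{\partial}^b n^0+\hat{\partial}^b\sqrt{s}}{(n^0+\sqrt{s})^2},
\]
and since $\hat\partial^b n^0 = \p^b/p^0$ and $\hat\partial^b\sqrt{s}$ is $q^0/\sqrt{s}$ times a polynomial in $\p/p^0,\q/q^0,1/q^0$, the numerator is $q^0$ times a polynomial ${\mathcal P}_1$ in the stated arguments, giving the $i=1$ term. For the inductive step, suppose the formula holds for multi-indices of length $|A|$ and apply one more derivative $\hat\partial^c$. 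Each summand is a product of the form $(q^0)^{|A|}\cdot (n^0+\sqrt{s})^{-(i+1)}\cdot {\mathcal P}_i$; applying $\hat\partial^c$ by the Leibniz rule produces three kinds of terms: (i) $\hat\partial^c$ hitting $(q^0)^{|A|}$ gives $|A|(q^0)^{|A|-1}\cdot(\q^c/q^0)=(q^0)^{|A|-1}\cdot(\text{poly})$ — here I absorb the missing power of $q^0$ by noting $1\le q^0$, i.e. multiply and divide to restore $(q^0)^{|A|+1}$ while keeping a bounded $1/q^0$ factor inside the polynomial, so the exponent matches $(q^0)^{|A|+1}$ with the same power $i+1$; (ii) $\hat\partial^c$ hitting the denominator $(n^0+\sqrt{s})^{-(i+1)}$ produces $-(i+1)(n^0+\sqrt{s})^{-(i+2)}(\hat\partial^c n^0+\hat\partial^c\sqrt{s})$, raising the denominator power from $i+1$ to $i+2$ and contributing one extra $q^0$ (from $\hat\partial^c\sqrt s$, or a harmless $1/p^0$ factor from $\hat\partial^c n^0$) together with polynomial factors — this is exactly the index-shift $i\mapsto i+1$ that keeps $i$ in the range $1,\dots,|A|+1$; (iii) $\hat\partial^c$ hitting ${\mathcal P}_i$ differentiates the various rational arguments $\p/p^0$, $\q/q^0$, $1/p^0$, $1/q^0$, $1/\sqrt{s}$, each of which, by Lemma \ref{lem_partial_1/p0} and the remarks above, returns an expression of the allowed type (a polynomial in the five listed arguments times, at worst, one extra $q^0/s$ pair which is absorbed into the $1/\sqrt s$ arguments and a $q^0$), leaving the denominator power unchanged. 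Collecting the three kinds of terms and reindexing, one gets a sum over $i=1,\dots,|A|+1$ of $(q^0)^{|A|+1}(n^0+\sqrt{s})^{-(i+1)}$ times new polynomials in the prescribed arguments, which is the claim for length $|A|+1$.

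The main obstacle is purely bookkeeping: one has to check that the power of $q^0$ in the numerator stays exactly $|A|$ (equivalently, $|A|+1$ after one more derivative) and never drops below it in a way that cannot be repaired by the trivial bound $q^0\ge 1$, and that the denominator exponent $i+1$ never exceeds $|A|+1$. Both facts follow because every application of $\hat\partial^c$ either leaves the $(n^0+\sqrt s)$-power fixed while possibly lowering the $q^0$-power by one (case (i), repaired by $q^0\ge1$), or raises the $(n^0+\sqrt s)$-power by one while supplying a compensating $q^0$ (case (ii)); case (iii) touches neither count. So I expect this lemma to be, as the paper's phrasing suggests, a routine induction once the Leibniz-rule bookkeeping is set up carefully, and I would refer the reader to Lemmas 5 and 6 of \cite{LeeNungesser172} for the analogous computation with $R$ in place of $e^{\gamma t}$.
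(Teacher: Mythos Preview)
Your proposal is correct and follows exactly the paper's approach: reduce to the orthonormal frame via $\partial^A = e^A_B\hat{\partial}^B$ and then invoke (or reprove by induction) the corresponding orthonormal-frame identity from \cite{LeeNungesser172}, which the paper cites as Lemma~7 there. One small simplification you missed: your case (i) is vacuous, since $\hat{\partial}^c$ differentiates only in the $p$-variables and hence annihilates $(q^0)^{|A|}$ (and likewise the arguments $\q/q^0$ and $1/q^0$ in case (iii)), so the $q^0$-power bookkeeping you worry about never actually arises and no appeal to $q^0\ge 1$ is needed.
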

\begin{proof}
This lemma is also proved by the same argument as in the previous lemma. We apply the relation \eqref{relation} to the proof of Lemma 7 of \cite{LeeNungesser172} and obtain the desired result.
\end{proof}

\begin{lemma}\label{lem_p'}
Consider post-collision momenta $p_*'$ and $q_*'$ for given $p_*$ and $q_*$. For a multi-index $A\neq 0$, we have the following estimate:
\[
|\partial^Ap'_*|+|\partial^Aq'_*|\leq C\max_{a,b}|(e^{-1})^a_b|(\max_{c,d}|e^c_d|)^{|A|}(q^0)^{|A|+4},
\]
where the constant $C$ does not depend on $p_*$.
\end{lemma}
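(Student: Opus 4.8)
The plan is to differentiate the explicit representation of the post-collision momenta directly, passing to an orthonormal frame and handling the rational ingredients with the two preceding lemmas; the only point requiring care is that the formula a priori produces powers of $p^0$ that must be absorbed.

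First I would use \eqref{relation}, $\partial^A=e^A_B\hat\partial^B$, so that $|\partial^Ap'_*|\le(\max_{c,d}|e^c_d|)^{|A|}\sum_{|B|=|A|}|\hat\partial^Bp'_*|$, which accounts for the factor $(\max_{c,d}|e^c_d|)^{|A|}$; moreover $q'_*=p_*+q_*-p'_*$, and for $A\ne0$ one has $\partial^Aq_*=0$ while $\partial^Ap_*$ is a constant matrix of norm $\lesssim\max_{a,b}|(e^{-1})^a_b|$ for $|A|=1$ and vanishes for $|A|\ge2$, so it suffices to bound $\hat\partial^Bp'_*$. In the orthonormal frame, using $n_ae^a_b=\n_b$ and $q_ae^a_b=\q_b$, the spatial post-collision momentum is $p'_k=p_k+2\Psi\Phi_k$ with
\[
\Psi=-q^0\frac{\n\cdot\omega}{\sqrt s}+\q\cdot\omega+\frac{(\n\cdot\omega)(\n\cdot\q)}{\sqrt s\,(n^0+\sqrt s)},\qquad\Phi_k=g_{ka}e^a_b\omega^b+\frac{(\n\cdot\omega)\,n_k}{\sqrt s\,(n^0+\sqrt s)},
\]
where $n_k=(e^{-1})^a_k\n_a$. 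The ingredients are of three kinds: affine functions of $\p$ — namely $\n\cdot\omega$, $\n\cdot\q$, $n_k$ — whose first $\hat\partial$-derivatives are constants ($\lesssim1$, $\lesssim q^0$, $\lesssim\max_{a,b}|(e^{-1})^a_b|$, respectively) and whose higher derivatives vanish; quantities free of $\p$ — namely $q^0$, $\q\cdot\omega$, $g_{ka}e^a_b$; and the rational factors $1/\sqrt s$, $1/(n^0+\sqrt s)$, for which Lemmas~\ref{lem_partial_1/p0} and \ref{lem_partial_1/n0+s} give $|\hat\partial^B(1/\sqrt s)|\le C(q^0)^{|B|}$ and $|\hat\partial^B(1/(n^0+\sqrt s))|\le C(q^0)^{|B|}$, using $s\ge4$, $n^0+\sqrt s\ge4$, $p^0,q^0\ge1$, $|\p|\le p^0$, $|\q|\le q^0$ from Lemma~\ref{lem_basic} to bound the polynomials there.

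I would then apply the Leibniz rule to $\hat\partial^B(\Psi\Phi_k)$ and estimate term by term. The $p^0$-growth carried by the undifferentiated affine factors (e.g. $|\n\cdot\omega|\le|\n|\le p^0+q^0$) is removed using $n^0+\sqrt s\ge n^0\ge|\n|$ together with the auxiliary bound $p^0\le Csq^0$, which follows from Lemma~\ref{lem_basic} by distinguishing $|\q|\ge|\p|/2$, where $p^0\lesssim q^0$ directly, from $|\q|<|\p|/2$, where $|\p-\q|\gtrsim|\p|$ and $h\sqrt{p^0q^0}\ge|\p-\q|$ force $s\gtrsim p^0/q^0$. Since $A\ne0$, every term of the expansion has at least one derivative beyond what the bare product would carry, and that derivative either differentiates a $p^0$-carrying affine factor into a bounded constant, a $\lesssim q^0$, or a $\lesssim\max_{a,b}|(e^{-1})^a_b|$ term, or it falls on a rational factor and yields an extra $1/s$ or $1/(n^0+\sqrt s)$; via $p^0/s\le Cq^0$ and $(p^0+q^0)/(n^0+\sqrt s)\le C$ this turns all residual $p^0$-dependence into constants and powers of $q^0$ — in particular the bare $p^0$ of the leading term $p_k=(e^{-1})^a_k\p_a$ disappears after one derivative. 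Summing the finitely many Leibniz terms and over $|B|=|A|$, the powers of $q^0$ accumulate to at most $|A|+4$ — one per derivative landing on a rational factor, at most four from the explicit prefactors in $\Psi$ and $\Phi_k$ — while exactly one factor $\max_{a,b}|(e^{-1})^a_b|$ is collected from the affine terms $p_k$, $n_k$; with $(\max_{c,d}|e^c_d|)^{|A|}$ from the first step this gives the claimed bound, and the estimate for $q'_*$ is then immediate.

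The argument needs no new identity — only the explicit formula, the chain rule, and the two derivative lemmas — so it is conceptually routine. The main obstacle is the bookkeeping: one must push the Leibniz expansion through the nested rational expressions while checking, in each branch, that (i) the exponent of $q^0$ does not exceed $|A|+4$, (ii) at most one power of $\max_{a,b}|(e^{-1})^a_b|$ and at most $|A|$ powers of $\max_{c,d}|e^c_d|$ accumulate, and (iii) because $A\ne0$ there is always enough $s$- or $(n^0+\sqrt s)$-decay to absorb the a priori $p^0$-growth via $p^0\le Csq^0$.
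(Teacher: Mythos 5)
Your outer bookkeeping (the relation $\partial^A=e^A_B\hat{\partial}^B$ contributing $(\max_{c,d}|e^c_d|)^{|A|}$, and one factor $\max_{a,b}|(e^{-1})^a_b|$ from converting components) coincides with the paper's proof. But the paper does not re-derive the orthonormal-frame estimate: it simply quotes estimate (27) of \cite{LeeNungesser172}, namely $|\hat{\partial}^B\p'|\leq C(q^0)^{|B|+4}$ for $|B|\geq 1$, and converts frames. You instead attempt to reprove that estimate by a term-by-term Leibniz argument, and it is exactly there that your argument has a gap.

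The gap is in the absorption step. Writing $p'_k=p_k+2\Psi\Phi_k$, the dangerous Leibniz terms are those in which all derivatives land on one of the two factors $\Psi$, $\Phi_k$ while the other is untouched. Each factor separately grows like $\sqrt{p^0}$: take $q_*=0$, so $s=2+2p^0$, $\n=\p$, and then $\Psi=-\p\cdot\omega/\sqrt{s}\sim\sqrt{p^0}$ while the second part of $\Phi_k$ is $\sim\sqrt{p^0}\,\max_{a,b}|(e^{-1})^a_b|$, because against $|\n|^2\leq (n^0)^2$ only one factor $1/(n^0+\sqrt{s})$ and one factor $1/\sqrt{s}$ are available, leaving $n^0/\sqrt{s}$; and the inequality $p^0\leq Csq^0$ controls $p^0/s$, not $p^0/\sqrt{s}$. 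Your bounds for the differentiated companion factor, obtained from Lemmas~\ref{lem_partial_1/p0} and \ref{lem_partial_1/n0+s} after discarding the $1/\sqrt{s}$, $1/(n^0+\sqrt{s})$, $1/p^0$ weights via $s\geq 4$ etc., certify only that it is $O((q^0)^m)$ with no decay in $p^0$ or $s$; the product of the two bounds is then $O(\sqrt{p^0}\,(q^0)^m)$, which is not the claimed $p^0$-free bound $C(q^0)^{|A|+4}$. The estimate is nevertheless true, but only because the differentiated factor actually decays in the dangerous regime (e.g.\ $|\hat{\partial}\Psi|\sim (p^0)^{-1/2}$ when $q_*=0$, $|\p|\to\infty$), which one sees only by retaining the $1/\sqrt{s}$ and $1/(n^0+\sqrt{s})$ structure of Lemmas~\ref{lem_partial_1/p0}--\ref{lem_partial_1/n0+s} in every term, and/or by using the cancellation $-q^0+\frac{\n_a\q^a}{n^0+\sqrt{s}}=-\frac{\sqrt{s}\,(\sqrt{s}/2+q^0)}{n^0+\sqrt{s}}$ coming from $q_\alpha n^\alpha=-s/2$. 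That finer analysis is precisely the content of the cited estimate (27) of \cite{LeeNungesser172}; as written, your scheme does not close. (A secondary point: in terms where $\Phi_k$ is undifferentiated, the single factor $\max_{a,b}|(e^{-1})^a_b|$ must come from the identification $g_{ka}e^a_b\omega^b=(e^{-1})^b_k\omega_b$, not from $p_k$ or $n_k$, so listing $g_{ka}e^a_b$ merely as a harmless $\p$-free constant leaves that part of the bound unaccounted for.)
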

\begin{proof}
Note that the post-collision momentum is given in an orthonormal frame by $\p'_j=e^k_j p_k'$, which is explicitly written as
\[
\p'_j=\p_j+2\bigg(-q^0\frac{\n_a\omega^a}{\sqrt{s}}+\q_a\omega^a+\frac{\n_a\omega^a\n_b\q^b}{\sqrt{s}(n^0+\sqrt{s})}\bigg)
\bigg(\eta_{aj}\omega^a+\frac{\n_a\omega^a\n_j}{\sqrt{s}(n^0+\sqrt{s})}\bigg).
\]
We use the estimate (27) of \cite{LeeNungesser172}, where the estimate of high order derivatives of $\p'$ is given by 
\[
|\hat{\partial}^A\p'|\leq C(q^0)^{|A|+4}
\]
for $|A|\geq 1$. We now apply the relation \eqref{relation} to the above. Since $p'_k=(e^{-1})^j_k\p'_j$, where $e^{-1}$ is the inverse of the matrix $e$ and $\partial^A p'_k=(e^{-1})^j_ke^A_B \hat{\partial}^B\p'_j$, we have
\[
|\partial^Ap'_k|\leq C\max_{a,b}|(e^{-1})^a_b|(\max_{c,d}|e^c_d|)^{|A|}(q^0)^{|A|+4}
\]
for each $k\in\{1,2,3\}$ and $|A|\geq 1$. This completes the proof of the estimate for $p'_*$, and the estimate of $q'_*$ is given by the same arguments.
\end{proof}

\section{Estimates for the Boltzmann equation}\label{sec_boltzmann}
In this section we study the Boltzmann equation for a given metric. Let us assume that a metric $^{(4)}g=-dt^2+g$ is given and the spatial metric $g$ satisfies the properties of {\sf (A)} below. We show that the Boltzmann equation admits global-in-time solutions for small initial data. The following are the assumptions on the spatial metric $g$.\bigskip

\noindent{\sf (A)} {\bf Assumptions on the spatial metric.}
Let $\gamma=(\Lambda/3)^{1/2}$. There exists a constant $c_0>0$ such that
\[
\frac{1}{c_0}e^{-2\gamma t}|p_*|^2\leq g^{ab}p_ap_b\leq c_0 e^{-2\gamma t}|p_*|^2,
\]
for any $p_*$. The Hubble variable $H$ and the scaled shear $F$ satisfy
\[
H=\gamma+O(e^{-2\gamma t}),\quad  0<F<\frac14.
\]
Each component of the metric satisfies $|g_{ab}|\leq Ce^{2\gamma t}$ and $|g^{ab}|\leq Ce^{-2\gamma t}$. Let $e^a_b$ be an orthonormal frame with the inverse $e^{-1}$. We assume that they satisfy $|e^a_b|\leq C e^{\gamma t}$ and $|(e^{-1})^a_b|\leq Ce^{-\gamma t}$.
\bigskip

\begin{lemma}\label{lem_monotone}
For each $p_*$, the component $p^0$ is monotonically decreasing in $t$, i.e. $\partial_t p^0<0$, and can also be estimated as
\[
|\partial_t p^0|\leq C\langle p_*\rangle,
\]
where $C$ is independent of $t$.
\end{lemma}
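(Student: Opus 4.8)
The plan is to start from the mass-shell relation for the (unit-mass) massive particles, which in the left-invariant frame reads $p^0=\sqrt{1+g^{ab}(t)p_ap_b}$, the covariant spatial components $p_*=(p_1,p_2,p_3)$ being constant along the characteristics of the Boltzmann equation — this is why \eqref{boltzmann} carries only $\partial_t$ and no transport term. Differentiating in $t$ with $p_*$ held fixed and using $\dot g_{ab}=2k_{ab}$ from \eqref{evolution1}, hence $\dot g^{ab}=-2k^{ab}$, one obtains
\[
\partial_t p^0=\frac{\dot g^{ab}p_ap_b}{2p^0}=-\frac{k^{ab}p_ap_b}{p^0}.
\]
Thus the whole statement reduces to a two-sided control of the quadratic form $k^{ab}p_ap_b$.

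For the sign I would split $k^{ab}=Hg^{ab}+\sigma^{ab}$ (raising indices in $\sigma_{ab}=k_{ab}-Hg_{ab}$), so that $k^{ab}p_ap_b=H|\hat p|^2+\sigma^{ab}p_ap_b$ with $|\hat p|^2=g^{ab}p_ap_b\geq0$. Passing to an orthonormal frame, $\sigma^{ab}$ is represented by a trace-free symmetric matrix, so its eigenvalues $\lambda_1,\lambda_2,\lambda_3$ satisfy $\sum_i\lambda_i=0$ and $\sum_i\lambda_i^2=\sigma_{ab}\sigma^{ab}=4H^2F$; consequently $\max_i|\lambda_i|\leq 2H\sqrt{F}$, which is strictly smaller than $H$ because $F<1/4$ by {\sf (A)}. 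Hence $|\sigma^{ab}p_ap_b|\leq 2H\sqrt{F}\,|\hat p|^2$, which gives on one side
\[
k^{ab}p_ap_b\geq\big(H-2H\sqrt{F}\big)|\hat p|^2>0\qquad(p_*\neq0),
\]
so $\partial_t p^0<0$, and on the other side $k^{ab}p_ap_b\leq\big(H+2H\sqrt{F}\big)|\hat p|^2\leq 2H|\hat p|^2$.

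For the estimate I would then use $|\hat p|^2/p^0=|\hat p|^2/\sqrt{1+|\hat p|^2}\leq|\hat p|$ to get $|\partial_t p^0|\leq 2H|\hat p|$, and invoke {\sf (A)} once more: it bounds $H$ uniformly in $t$ and yields $|\hat p|^2=g^{ab}p_ap_b\leq c_0e^{-2\gamma t}|p_*|^2\leq c_0|p_*|^2\leq c_0\langle p_*\rangle^2$, whence $|\partial_t p^0|\leq C\langle p_*\rangle$ with $C$ independent of $t$. The only genuinely delicate point is the sign of $\partial_t p^0$: it rests entirely on the scaled shear being small, $F<1/4$, so that the possibly-negative shear contribution $\sigma^{ab}p_ap_b$ cannot overcome the expansion term $H|\hat p|^2$; this is exactly the role of the hypothesis $H(t_0)<(7/6)^{1/2}\gamma$ in Lemma \ref{lem_F}. (At $p_*=0$ one has $p^0\equiv1$, so $\partial_t p^0=0$; strict monotonicity is meant for $p_*\neq0$.)
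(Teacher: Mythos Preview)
Your proof is correct and follows essentially the same route as the paper. The only cosmetic differences are that the paper obtains the bound $|\sigma^{ab}p_ap_b|\leq(\sigma_{ab}\sigma^{ab})^{1/2}g^{cd}p_cp_d$ by a direct Cauchy--Schwarz inequality rather than your eigenvalue argument (both give the same factor $2H\sqrt{F}$), and for the size estimate the paper writes $|\partial_tp^0|\leq H(1+2\sqrt{F})p^0$ and then uses $p^0\leq C\langle p_*\rangle$ from~{\sf (A)}, whereas you pass through $|\hat p|^2/p^0\leq|\hat p|$ first; your remark about the degenerate point $p_*=0$ is a useful clarification the paper omits.
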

\begin{proof}
On the mass shell we have $p^0=(1+g^{ab}p_ap_b)^{1/2}$ and obtain
\[
\partial_t p^0=-\frac{k^{ab}p_ap_b}{p^0}=-\frac{\sigma^{ab}p_ap_b+Hg^{ab}p_ap_b}{p^0}.
\]
The quantity $\sigma^{ab}p_ap_b$ can be estimated as
\[
|\sigma^{ab}p_ap_b|\leq (\sigma^{ab}\sigma_{ab})^{\frac12}(g^{cd}p_cp_d)=2HF^{\frac12}(g^{ab}p_ap_b),
\]
which shows that
\[
\partial_tp^0\leq \frac{H (-1+2F^{\frac12})(g^{ab}p_ap_b)}{p^0}.
\]
The right side is negative by the assumption {\sf(A)}, and this shows that $p^0$ is monotonically decreasing in $t$. With the above estimate of $\sigma^{ab}p_ap_b$ we also have
\[
|\partial_tp^0|\leq H(1+2F^{\frac12})p^0.
\]
Since $p^0\leq C\langle p_*\rangle$ by the assumption {\sf (A)}, we obtain the second estimate, and this completes the proof.
\end{proof}

\begin{lemma}\label{lem_f1}
Let $f$ be a solution of the Boltzmann equation. Then, it satisfies the following estimate:
\[
\|f(t)\|^2_k\leq \|f(t_0)\|^2_k+C\sup_{\tau\in[t_0,t]}\|f(\tau)\|^3_k,
\]
where $k$ is a positive real number.
\end{lemma}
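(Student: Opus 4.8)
The plan is to run a weighted $L^2$ energy estimate for $\|f(t)\|_k^2$ and then integrate the resulting differential inequality in time, the time-integrability coming from the exponential growth of $\det g$. Since $f$ solves $\partial_t f=Q(f,f)$ and the weight $\langle p_*\rangle$ does not depend on $t$, differentiating under the integral sign gives
\[
\frac{d}{dt}\|f(t)\|_k^2=\int_{\bbr^3}\langle p_*\rangle^{2k}e^{p^0}(\partial_t p^0)\,|f|^2\,dp_*+2\int_{\bbr^3}\langle p_*\rangle^{2k}e^{p^0}f\,Q(f,f)\,dp_*.
\]
By Lemma \ref{lem_monotone} we have $\partial_t p^0<0$, so the first term on the right is non-positive and may be discarded. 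It therefore suffices to establish the key bound
\[
\bigg|\int_{\bbr^3}\langle p_*\rangle^{2k}e^{p^0}f\,Q(f,f)\,dp_*\bigg|\leq C(\det g)^{-\frac14}\|f(t)\|_k^3 .
\]
Once this is available, integrating from $t_0$ to $t$ and using $\det g(t)\geq\det g(t_0)e^{6\gamma(t-t_0)}$ (which follows from $d(\log\det g)/dt=6H\geq 6\gamma$) renders $\int_{t_0}^{t}(\det g)^{-1/4}\,d\tau$ bounded by a constant independent of $t$; bounding $\|f(\tau)\|_k^3$ by its supremum over $[t_0,t]$ then yields the statement.

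To prove the key bound I would split $Q=Q_+-Q_-$ and exploit that, for Israel particles, the scattering kernel collapses to $v_M\sigma(h,\theta)=(p^0q^0\sqrt{s})^{-1}$, which by Lemma \ref{lem_basic} is bounded (by $1/2$) and independent of $\omega$. The loss contribution equals $\int_{\bbr^3}\langle p_*\rangle^{2k}e^{p^0}|f|^2\,\nu\,dp_*$ with collision frequency $\nu(p_*)=(\det g)^{-1/2}\int\!\!\int v_M\sigma\,f(q_*)\,d\omega\,dq_*$, so $|\nu(p_*)|\leq C(\det g)^{-1/2}\int_{\bbr^3}|f(q_*)|\,dq_*$; Cauchy--Schwarz against the weight $\langle q_*\rangle^{2k}e^{q^0}$ together with Lemma \ref{lem_int} (the Gaussian factor $e^{-q^0}$ absorbing any power $(q^0)^{-m}\langle q_*\rangle^{-2k}$) gives $\int_{\bbr^3}|f(q_*)|\,dq_*\leq C(\det g)^{1/4}\|f\|_k$, whence $|\nu(p_*)|\leq C(\det g)^{-1/4}\|f\|_k$ and the loss term is controlled by $C(\det g)^{-1/4}\|f\|_k^3$.

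For the gain term the weight has to be moved onto the post-collision momenta. Writing $f_w(p_*)=\langle p_*\rangle^{k}e^{p^0(t)/2}f(p_*)$, so that $\|f_w\|_{L^2(\bbr^3)}=\|f\|_k$, Lemma \ref{lem_pp'q'} gives $\langle p_*\rangle^{k}\leq C\langle p_*'\rangle^{k}\langle q_*'\rangle^{k}$, and the conservation law $p^0+q^0=p'^0+q'^0$ (immediate from $q_\alpha'=p_\alpha+q_\alpha-p_\alpha'$) gives $e^{p^0/2-p'^0/2-q'^0/2}=e^{-q^0/2}$; hence
\begin{align*}
\bigg|\int_{\bbr^3}\langle p_*\rangle^{2k}e^{p^0}f\,Q_+(f,f)\,dp_*\bigg|
&\leq C(\det g)^{-\frac12}\int_{\bbr^3}\!\!\int_{\bbr^3}\!\!\int_{\bbs^2} v_M\sigma\,e^{-\frac12 q^0}\\
&\quad\times|f_w(p_*)|\,|f_w(p_*')|\,|f_w(q_*')|\,d\omega\,dq_*\,dp_*.
\end{align*}
Now I would use boundedness of $v_M\sigma$ and Cauchy--Schwarz in $(p_*,q_*,\omega)$, keeping $|f_w(p_*)|$ in one factor and $|f_w(p_*')|\,|f_w(q_*')|$ in the other: the first factor integrates, using $\int_{\bbr^3}e^{-q^0/2}\,dq_*\leq C(\det g)^{1/2}$ (a variant of Lemma \ref{lem_int}), to $C(\det g)^{1/2}\|f\|_k^2$, and the second is controlled by the change of variables $(p_*,q_*)\mapsto(p_*',q_*')$ at fixed $\omega$, whose Jacobian is estimated as in \cite{LeeNungesser172,GuoStrain12} (with $R^2$ replaced by $e^{2\gamma t}$, exactly as in the proof of Lemma \ref{lem_pp'q'}), again yielding $C(\det g)^{1/2}\|f\|_k^2$. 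Collecting the powers of $\det g$ produces the bound $C(\det g)^{-1/4}\|f\|_k^3$.

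Combining the loss and gain contributions gives the key bound, and the time integration described in the first paragraph completes the proof. I expect the main obstacle to be the gain-term estimate, and within it the handling of the factors $f_w(p_*')$ and $f_w(q_*')$ evaluated at the post-collision momenta: this is where the explicit formula for $p_*'$, $q_*'$ and the associated change-of-variables bound genuinely enter, while everything else is routine bookkeeping with the weights and with the exponential growth of $\det g$.
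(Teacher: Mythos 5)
Your strategy is essentially the paper's: a weighted energy estimate (the paper integrates $f\,\partial_t f$ in time and then uses the monotonicity of $p^0(t)$, which is equivalent to your differential form with the $\partial_t p^0\le 0$ term discarded), transfer of the weight to the post-collision momenta via Lemma \ref{lem_pp'q'} and energy conservation, a Cauchy--Schwarz split isolating $f_w(p_*)$ from $f_w(p_*')f_w(q_*')$, the pre-/post-collision change of variables, Lemma \ref{lem_int} for the $dq_*$ integral, and finally integrability of $(\det g)^{-1/4}$. Estimating the loss term through the collision frequency instead of simply dropping it (as the paper does, using nonnegativity of $f$) is a harmless variant and is done correctly.

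However, the $\det g$ bookkeeping in your gain-term estimate does not add up as written. The factor containing $|f_w(p_*')|\,|f_w(q_*')|$ must retain the weight $\frac{1}{p^0q^0}$ coming from the Israel kernel $v_M\sigma=(p^0q^0\sqrt{s})^{-1}$, because the change of variables you invoke is the \emph{exact} identity $(p^0q^0)^{-1}dp_*dq_*=(p'^0q'^0)^{-1}dp_*'dq_*'$ at fixed $\omega$; it carries no Jacobian factor depending on the metric, so there is no ``$R^2$ replaced by $e^{2\gamma t}$'' correction there, and that factor is bounded simply by $C\|f(\tau)\|_k^2$ (after the identity, use $\frac{1}{p^0q^0}\le 1$), with \emph{no} power of $\det g$. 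The only gain of $(\det g)^{1/2}$ comes from $\int e^{-q^0}dq_*$ in the factor containing $|f_w(p_*)|^2$, i.e. that factor is $C(\det g)^{1/4}\|f(\tau)\|_k$. With your stated values ($C(\det g)^{1/2}\|f\|_k^2$ for \emph{both} factors) the prefactor $(\det g)^{-1/2}$ is completely consumed and one is left with at best $C\|f(\tau)\|_k^3$, without the decaying weight $(\det g)^{-1/4}$; the time integral would then grow linearly in $t$ and the lemma would not follow. The correct count, $(\det g)^{-1/2}\cdot(\det g)^{1/4}\|f\|_k\cdot\|f\|_k^2=(\det g)^{-1/4}\|f\|_k^3$, is exactly the paper's, so the fix is only to redo this power counting, keeping $\frac{1}{p^0q^0}$ in the post-collision factor and using $\sqrt{s}\ge 2$ to discard the remaining kernel factor.
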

\begin{proof}
The proof of this lemma is almost the same with that of Lemma 9 of \cite{LeeNungesser172}. Multiplying the Boltzmann equation by $f$ and integrating it on $[0,t]$, we obtain
\[
f^2(t,p_*)=f^2(t_0,p_*)+2\int_{t_0}^tf(\tau,p_*)Q(f,f)(\tau,p_*)d\tau.
\]
Note that the quantity $p^0(t)$ is monotonically decreasing in $t$ for each $p_*$ by Lemma \ref{lem_monotone}. We use this monotone property to estimate the above as follows:
\begin{align*}
&\langle p_*\rangle^{2k} e^{p^0(t)}f^2(t,p_*)\leq\langle p_*\rangle^{2k} e^{p^0(t_0)}f^2(t_0,p_*)\\
&\quad +C\int_{t_0}^t(\det g)^{-\frac12}\langle p_*\rangle^ke^{\frac12p^0(\tau)}f(\tau,p_*)\\
&\qquad\times\iint \frac{e^{-\frac12q^0(\tau)}}{p^0q^0\sqrt{s}}\langle p_*'\rangle^ke^{\frac12p'^0(\tau)}f(\tau,p_*')\langle q_*'\rangle^ke^{\frac12q'^0(\tau)}f(\tau,q_*')d\omega dq_* d\tau,
\end{align*}
where we ignored the loss term and used Lemma \ref{lem_pp'q'} with the energy conservation at time $\tau$:
\[
p'^0(\tau)+q'^0(\tau)=p^0(\tau)+q^0(\tau).
\]
Integrating the above with respect to $p_*$, we obtain the following estimate:
\begin{align}
\|f(t)\|_k^2&\leq\|f(t_0)\|^2_k+C\int_{t_0}^t(\det g)^{-\frac12}\int\langle p_*\rangle^ke^{\frac12p^0(\tau)}f(\tau,p_*)\nonumber\\
&\qquad\times\iint \frac{e^{-\frac12q^0(\tau)}}{p^0q^0\sqrt{s}}\langle p_*'\rangle^ke^{\frac12p'^0(\tau)}f(\tau,p_*')\langle q_*'\rangle^ke^{\frac12q'^0(\tau)}f(\tau,q_*')d\omega dq_* dp_*d\tau\allowdisplaybreaks\nonumber\\
&\leq\|f(t_0)\|^2_k+C\int_{t_0}^t(\det g)^{-\frac12}\bigg(\iiint\langle p_*\rangle^{2k}e^{p^0(\tau)}f^2(p_*)e^{-q^0(\tau)}d\omega dq_* dp_*\bigg)^{\frac12}\nonumber\\
&\qquad\times\bigg(\iiint \frac{1}{p^0q^0}\langle p_*'\rangle^{2k}e^{p'^0(\tau)}f^2(p_*')\langle q_*'\rangle^{2k}e^{q'^0(\tau)}f^2(q_*')d\omega dq_* dp_*\bigg)^{\frac12}d\tau\allowdisplaybreaks\nonumber\\
&\leq\|f(t_0)\|^2_k+C\int_{t_0}^t(\det g)^{-\frac12}\|f(\tau)\|_k\bigg(\int e^{-q^0(\tau)}dq_* \bigg)^{\frac12}\nonumber\\
&\qquad\times\bigg(\iiint \frac{1}{p^0q^0}\langle p_*\rangle^{2k}e^{p^0(\tau)}f^2(p_*)\langle q_*\rangle^{2k}e^{q^0(\tau)}f^2(q_*)d\omega dq_* dp_*\bigg)^{\frac12}d\tau\allowdisplaybreaks\nonumber\\
&\leq\|f(t_0)\|^2_k+C\int_{t_0}^t(\det g)^{-\frac14}\|f(\tau)\|_k^3d\tau,\label{est_f}
\end{align}
where we used $(p^0q^0)^{-1}dp_*dq_*=(p'^0q'^0)^{-1}dp_*'dq_*'$ and Lemma \ref{lem_int}. Since the quantity $(\det g)^{-1/4}$ is integrable, we obtain the desired result.
\end{proof}

\begin{lemma}\label{lem_f2}
Let $f$ be a solution of the Boltzmann equation. Then, it satisfies the following estimate:
\[
\|\partial^Af(t)\|^2_k\leq \|\partial^A f(t_0)\|^2_{k}+C\sup_{\tau\in[t_0,t]}\|f(\tau)\|_{k,N}^3,
\]
where $1\leq |A|\leq N$ is a multi-index and $k$ is a positive real number.
\end{lemma}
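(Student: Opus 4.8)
The plan is to carry out the same weighted $L^2$ estimate as in Lemma~\ref{lem_f1}, but on the differentiated equation. Since $\partial^A$ acts only on the momentum variable it commutes with $\partial_t$, so applying $\partial^A$ to \eqref{boltzmann} and using $v_M\sigma(h,\theta)=(p^0q^0\sqrt{s})^{-1}$ gives $\partial_t\partial^Af=\partial^AQ_+(f,f)-\partial^AQ_-(f,f)$. Multiplying by $\partial^Af$ and integrating over $[t_0,t]$ yields, pointwise in $p_*$,
\[
(\partial^Af)^2(t,p_*)=(\partial^Af)^2(t_0,p_*)+2\int_{t_0}^t(\partial^Af)\,\partial^AQ(f,f)\,d\tau .
\]
I would then multiply by the weight $\langle p_*\rangle^{2k}e^{p^0(t)}$ and integrate in $p_*$. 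As in Lemma~\ref{lem_f1}, the $t$-dependence of the weight is harmless because $p^0(\cdot)$ is monotonically decreasing (Lemma~\ref{lem_monotone}): $e^{p^0(t)}\le e^{p^0(t_0)}$ in the initial term, and $e^{p^0(t)}\le e^{p^0(\tau)}$ may be pushed inside the time integral. Writing $Q_-(f,f)=f\,\nu$ with $\nu\ge0$ the collision frequency and using $f\ge0$, the top-order part of the loss term contributes $-2\int\langle p_*\rangle^{2k}e^{p^0}\nu(\partial^Af)^2\le0$, which I discard; everything else must be bounded.

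The next step is a Leibniz (Fa\`a di Bruno) expansion of $\partial^AQ_\pm$. Momentum derivatives land on the kernel factors $1/p^0$ and $1/\sqrt{s}$, controlled by Lemma~\ref{lem_partial_1/p0}; on the factor $1/(n^0+\sqrt{s})$ inside the post-collision momenta, controlled by Lemma~\ref{lem_partial_1/n0+s}; and, through the chain rule, on $f(p_*')$ and $f(q_*')$, where the bounds on $|\partial^Cp_*'|$ and $|\partial^Cq_*'|$ from Lemma~\ref{lem_p'} are used. Collecting terms, $\partial^AQ_+(f,f)$ becomes a finite sum of expressions
\[
(\det g)^{-1/2}\iint \mathcal{K}_A(p_*,q_*,\omega)\,(\partial^{B_1}f)(p_*')\,(\partial^{B_2}f)(q_*')\,d\omega\,dq_*,\qquad |B_1|+|B_2|\le|A| ,
\]
where $\mathcal{K}_A$ is bounded by $(p^0q^0\sqrt{s})^{-1}$ times a product of frame matrices $e^a_b,(e^{-1})^a_b$ and a polynomial in $q^0$; the loss term produces a similar, simpler sum in which $p_*',q_*'$ are replaced by $p_*$ and no chain rule appears.

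Each such term is then estimated exactly as in Lemma~\ref{lem_f1}: Lemma~\ref{lem_pp'q'} dominates $\langle p_*\rangle^{2k}$ by $C\langle p_*'\rangle^{2k}\langle q_*'\rangle^{2k}$; energy conservation $p^0(\tau)+q^0(\tau)=p'^0(\tau)+q'^0(\tau)$ splits $e^{p^0}$ into $e^{p'^0/2}e^{q'^0/2}$ and leaves a spare $e^{-q^0/2}$ which absorbs the polynomial in $q^0$ (at no cost in weight); a Cauchy--Schwarz in $(p_*,q_*,\omega)$ together with the change of variables $(p^0q^0)^{-1}dp_*dq_*=(p'^0q'^0)^{-1}dp_*'dq_*'$ and Lemma~\ref{lem_int} (for $\int e^{-q^0}dq_*\le C(\det g)^{1/2}$) reduces the term to
\[
C\,m_A(\tau)\,\|\partial^Af(\tau)\|_k\,\|\partial^{B_1}f(\tau)\|_k\,\|\partial^{B_2}f(\tau)\|_k\le C\,m_A(\tau)\,\|f(\tau)\|_{k,N}^3 ,
\]
where $m_A(\tau)$ is a product of powers of $\det g(\tau)$ and of the frame matrices, and where I used $|B_1|,|B_2|,|A|\le N$. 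The loss-term remainder is handled the same way, with one factor replaced by a crude Cauchy--Schwarz bound on $\int\mathcal{K}\,f(q_*)\,dq_*$. Summing over the finitely many terms gives
\[
\|\partial^Af(t)\|_k^2\le\|\partial^Af(t_0)\|_k^2+C\int_{t_0}^t m_A(\tau)\,\|f(\tau)\|_{k,N}^3\,d\tau .
\]

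The step requiring real care — and what I expect to be the main obstacle — is showing that $m_A(\tau)$ is integrable on $[t_0,\infty)$, so that the time integral collapses into a bare constant times $\sup_\tau\|f(\tau)\|_{k,N}^3$, as in the statement. This is where assumption~{\sf(A)} is used in full: the determinant bounds ($\det g$ comparable to $e^{6\gamma t}$ from above and below) together with the frame-matrix bounds on $e^a_b$ and $(e^{-1})^a_b$ must be balanced against the time factors generated when the momentum derivatives are expressed in the orthonormal frame via $\partial^A=e^A_B\hat\partial^B$ and via Lemma~\ref{lem_p'}. One also has to keep track of the elementary but essential fact that, in any term of the expansion, at most one of the three $f$-factors carries the full $|A|$ momentum derivatives, so that the estimate closes in $\|f\|_{k,N}^3$ rather than in a higher norm. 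Apart from this exponent-bookkeeping, the argument is routine and runs parallel to the corresponding higher-order estimate in \cite{LeeNungesser172}.
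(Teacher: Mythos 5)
Your proposal is correct and follows essentially the same route as the paper's proof: differentiate the equation, multiply by $\partial^A f$, weight by $\langle p_*\rangle^{2k}e^{p^0(t)}$ using the monotonicity of $p^0$ (Lemma \ref{lem_monotone}), expand by Leibniz/Fa\`a di Bruno with Lemmas \ref{lem_partial_1/p0}, \ref{lem_partial_1/n0+s}, \ref{lem_p'}, then use Lemma \ref{lem_pp'q'}, energy conservation, Cauchy--Schwarz, the change of variables $(p^0q^0)^{-1}dp_*dq_*=(p'^0q'^0)^{-1}dp_*'dq_*'$ and Lemma \ref{lem_int}. The bookkeeping you flag as the main obstacle resolves exactly as you anticipate: under assumption {\sf (A)} the frame factors are absorbed into the $(q^0)$-powers, those are eaten by the spare $e^{-q^0/2}$ via Lemma \ref{lem_int}, and the net time factor is $(\det g)^{-\frac12}(\det g)^{\frac14}=(\det g)^{-\frac14}=O(e^{-\frac32\gamma t})$, which is integrable, giving the stated bound (the paper simply estimates the top-order loss term in absolute value rather than discarding it by sign, an immaterial difference).
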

\begin{proof}
The proof of this lemma is also almost similar to that of Lemma 10 of \cite{LeeNungesser172}, and we briefly sketch the proof. For a multi-index $A\neq 0$, we take $\partial^A$ to the Boltzmann equation, multiply the equation by $\partial^A f$, and integrate it over $[t_0,t]$ to obtain
\begin{align*}
&(\partial^A f)^2(t,p_*)=(\partial^A f)^2(t_0,p_*)\\
&\quad +2\int_0^t(\det g)^{-\frac12}\partial^A f(\tau,p_*)\sum\iint \partial^{A_0}\bigg[\frac{1}{p^0q^0\sqrt{s}}\bigg] \partial^{A_1}\Big[f(p_*')\Big]\partial^{A_2}\Big[f(q_*')\Big]d\omega dq_*d\tau\\
&\quad -2\int_0^t(\det g)^{-\frac12}\partial^A f(\tau,p_*)\sum\iint \partial^{A_0}\bigg[\frac{1}{p^0q^0\sqrt{s}}\bigg] (\partial^{A_1}f)(p_*)f(q_*)d\omega dq_*d\tau,
\end{align*}
where the summations are taken for all the possible $A_0$, $A_1$, and $A_2$ satisfying $A=A_0+A_1+A_2$ and $A=A_0+A_1$, respectively. Note that the multi-index notation in this paper is different from the one used in \cite{LeeNungesser172}, and here $A=A_0+A_1$ means that the set $A$ is equal to the disjoint union $A_0\sqcup A_1$, and $A=A_0+A_1+A_2$ is understood in a similar way. Multiplying the above by $\langle p_*\rangle^{2k} e^{p^0(t)}$ and using the monotone property of $p^0(t)$ and Lemma \ref{lem_pp'q'}, we obtain the following:
\begin{align}
&\langle p_*\rangle^{2k} e^{p^0(t)}(\partial^A f)^2(t,p_*)\leq \langle p_*\rangle^{2k} e^{p^0(t_0)}(\partial^A f)^2(t_0,p_*)\nonumber\\
&\quad +C\sum\int_{t_0}^t(\det g)^{-\frac12}\langle p_*\rangle^{k} e^{\frac12 p^0(\tau)}|\partial^A f(\tau,p_*)|\nonumber\\
&\qquad\times\iint \bigg|\partial^{A_0}\bigg[\frac{1}{p^0q^0\sqrt{s}}\bigg]\bigg| e^{-\frac12 q^0(\tau)}\langle p_*'\rangle^{k} e^{\frac12 p'^0(\tau)}\Big|\partial^{A_1}\Big[f(p_*')\Big]\Big|\langle q_*'\rangle^{k} e^{\frac12 q'^0(\tau)}\Big|\partial^{A_2}\Big[f(q_*')\Big]\Big|d\omega dq_*d\tau\allowdisplaybreaks\nonumber\\
&\quad +C\sum\int_{t_0}^t(\det g)^{-\frac12}\langle p_*\rangle^{k} e^{\frac12 p^0(\tau)}|\partial^A f(\tau,p_*)|\nonumber\\
&\qquad\times\iint \bigg|\partial^{A_0}\bigg[\frac{1}{p^0q^0\sqrt{s}}\bigg]\bigg|e^{-\frac12 q^0(\tau)}\langle p_*\rangle^{k} e^{\frac12 p^0(\tau)}| (\partial^{A_1}f)(p_*)|\langle q_*\rangle^{k} e^{\frac12 q^0(\tau)}f(q_*)d\omega dq_*d\tau.\label{est_partial_f1}
\end{align}
The partial derivatives in the integrands are estimated by Lemma \ref{lem_partial_1/p0}, \ref{lem_partial_1/n0+s}, and \ref{lem_p'}. We first notice that the assumption {\sf (A)} shows that Lemma \ref{lem_p'} implies
\[
|\partial^Bp'_*|+|\partial^Bq'_*|\leq C(q^0)^{|B|+4}
\]
for any multi-index $|B|\geq 1$. Since the components $e^a_b$ are bounded by the assumption {\sf (A)}, Lemma \ref{lem_partial_1/p0} implies that
\[
\bigg|\partial^{A_0}\bigg[\frac{1}{p^0q^0\sqrt{s}}\bigg]\bigg|\leq \frac{C(q^0)^{|A_0|}}{p^0q^0}.
\]
The quantities $\partial^{A_1}[f(p_*')]$ and $\partial^{A_2}[f(q_*')]$ are estimated as in \cite{LeeNungesser172}. Applying Faa di Bruno's formula and Lemma \ref{lem_p'} with the assumption {\sf (A)} we obtain
\begin{align*}
\Big|\partial^{A_1}\Big[f(p_*')\Big]\Big|\leq C(q^0)^{5|A_1|}\sum |(\partial^{B}f)(p_*')|,
\end{align*}
where the summation is finite and taken over $1\leq|B|\leq |A_1|$. We obtain a similar estimate for $\partial^{A_2}[f(q_*')]$, and the inequality \eqref{est_partial_f1} is estimated as follows:
\begin{align*}
&\langle p_*\rangle^{2k} e^{p^0(t)}(\partial^A f)^2(t,p_*)\leq \langle p_*\rangle^{2k} e^{p^0(t_0)}(\partial^A f)^2(t_0,p_*)\nonumber\\
&\quad +C\sum\int_{t_0}^t(\det g)^{-\frac12}\langle p_*\rangle^{k} e^{\frac12 p^0(\tau)}|\partial^A f(\tau,p_*)|\nonumber\\
&\qquad\times\iint \frac{(q^0)^{5|A|}}{p^0q^0} e^{-\frac12 q^0(\tau)}\langle p_*'\rangle^{k} e^{\frac12 p'^0(\tau)}|(\partial^{B_1}f)(p_*')|\langle q_*'\rangle^{k} e^{\frac12 q'^0(\tau)}|(\partial^{B_2}f)(q_*')|d\omega dq_*d\tau\allowdisplaybreaks\nonumber\\
&\quad +C\sum\int_{t_0}^t(\det g)^{-\frac12}\langle p_*\rangle^{k} e^{\frac12 p^0(\tau)}|\partial^A f(\tau,p_*)|\nonumber\\
&\qquad\times\iint \frac{(q^0)^{|A_0|}}{p^0q^0}e^{-\frac12 q^0(\tau)}\langle p_*\rangle^{k} e^{\frac12 p^0(\tau)}| (\partial^{A_1}f)(p_*)|\langle q_*\rangle^{k} e^{\frac12 q^0(\tau)}f(q_*)d\omega dq_*d\tau,
\end{align*}
where the summation of the second term is taken over some $B_1$ and $B_2$ satisfying $|B_1|+|B_2|\leq |A|$. Integrating the above with respect to $p_*$, we obtain 
\begin{align*}
&\|\partial^A f(t)\|^2_k\leq \|\partial^Af(t_0)\|^2_k
+C\sum\int_{t_0}^t(\det g)^{-\frac12}\int\langle p_*\rangle^{k} e^{\frac12 p^0(\tau)}|\partial^A f(\tau,p_*)|\\
&\qquad\times\iint \frac{(q^0)^{5|A|}}{p^0q^0} e^{-\frac12 q^0(\tau)}\langle p_*'\rangle^{k} e^{\frac12 p'^0(\tau)}|(\partial^{B_1}f)(p_*')|\langle q_*'\rangle^{k} e^{\frac12 q'^0(\tau)}|(\partial^{B_2}f)(q_*')|d\omega dq_*dp_*d\tau\allowdisplaybreaks\\
&\quad +C\sum\int_{t_0}^t(\det g)^{-\frac12}\int\langle p_*\rangle^{k} e^{\frac12 p^0(\tau)}|\partial^A f(\tau,p_*)|\nonumber\\
&\qquad\times\iint \frac{(q^0)^{|A_0|}}{p^0q^0}e^{-\frac12 q^0(\tau)}\langle p_*\rangle^{k} e^{\frac12 p^0(\tau)}| (\partial^{A_1}f)(p_*)|\langle q_*\rangle^{k} e^{\frac12 q^0(\tau)}f(q_*)d\omega dq_*dp_*d\tau\allowdisplaybreaks\\
&\leq \|\partial^Af(t_0)\|^2_k 
+C\sum\int_{t_0}^t(\det g)^{-\frac12}\bigg(\iiint \langle p_*\rangle^{2k} e^{p^0(\tau)}|\partial^A f(p_*)|^2(q^0)^{10|A|}e^{-q^0(\tau)}d\omega dq_* dp_*\bigg)^{\frac12}\\
&\qquad\times\bigg(\iiint \frac{1}{p^0q^0} \langle p_*'\rangle^{2k} e^{p'^0(\tau)}|(\partial^{B_1}f)(p_*')|^2\langle q_*'\rangle^{2k} e^{q'^0(\tau)}|(\partial^{B_2}f)(q_*')|^2d\omega dq_*dp_*\bigg)^{\frac12}d\tau\allowdisplaybreaks\\
&\quad+C\sum\int_{t_0}^t(\det g)^{-\frac12}\bigg(\iiint \langle p_*\rangle^{2k} e^{p^0(\tau)}|\partial^A f(p_*)|^2(q^0)^{2|A_0|}e^{-q^0(\tau)}d\omega dq_* dp_*\bigg)^{\frac12}\\
&\qquad\times\bigg(\iiint \frac{1}{p^0q^0} \langle p_*\rangle^{2k} e^{p^0(\tau)}|(\partial^{A_1}f)(p_*)|^2\langle q_*\rangle^{2k} e^{q^0(\tau)}f^2(q_*)d\omega dq_*dp_*\bigg)^{\frac12}d\tau\allowdisplaybreaks\\
&\leq \|\partial^Af(t_0)\|^2_k 
+C\sum\int_{t_0}^t(\det g)^{-\frac12}\|\partial^Af(\tau)\|_k\bigg(\int (q^0)^{10|A|}e^{-q^0(\tau)}dq_*\bigg)^{\frac12}\\
&\qquad\times\bigg(\iiint \frac{1}{p^0q^0} \langle p_*\rangle^{2k} e^{p^0(\tau)}|(\partial^{B_1}f)(p_*)|^2\langle q_*\rangle^{2k} e^{q^0(\tau)}|(\partial^{B_2}f)(q_*)|^2d\omega dq_*dp_*\bigg)^{\frac12}d\tau\allowdisplaybreaks\\
&\quad+C\sum\int_{t_0}^t(\det g)^{-\frac12}\|\partial^Af(\tau)\|_k\bigg(\int (q^0)^{2|A_0|}e^{-q^0(\tau)}dq_* \bigg)^{\frac12}\|\partial^{A_1}f(\tau)\|_k\|f(\tau)\|_kd\tau\allowdisplaybreaks\\
&\leq \|\partial^Af(t_0)\|^2_k 
+C\sum\int_{t_0}^t(\det g)^{-\frac14}\|\partial^Af(\tau)\|_k\|\partial^{B_1}f(\tau)\|_k\|\partial^{B_2}f(\tau)\|_kd\tau\allowdisplaybreaks\\
&\quad+C\sum\int_{t_0}^t(\det g)^{-\frac14}\|\partial^Af(\tau)\|_k\|\partial^{A_1}f(\tau)\|_k\|f(\tau)\|_kd\tau,
\end{align*}
where we used $(p^0q^0)^{-1}dp_*dq_*=(p'^0q'^0)^{-1}dp_*'dq_*'$ and Lemma \ref{lem_int}. We obtain
\begin{align}\label{est_partial_f2}
\|\partial^A f(t)\|^2_k\leq \|\partial^Af(t_0)\|^2_k+C\int_{t_0}^t(\det g)^{-\frac14}\|f(\tau)\|^3_{k,N}d\tau,
\end{align}
and the integrability of $(\det g)^{-1/4}$ gives the desired result.
\end{proof}

\subsection{Global-in-time existence for the Boltzmann equation}\label{sec_boltzmann_existence}
For a given metric $g$ satisfying the assumption {\sf (A)} the local-in-time existence of classical solutions to the Boltzmann equation is obtained by a standard iteration method. The estimates of Lemma \ref{lem_f1} and \ref{lem_f2} show that small solutions are bounded globally in time such that
\[
\|f(t)\|^2_{k,N}\leq \|f(t_0)\|^2_{k,N}+C\sup_{\tau\in[t_0,t]}\|f(\tau)\|_{k,N}^3.
\]
Hence, we conclude that there exists a small $\varepsilon>0$ such that if initial data is given as $\|f(t_0)\|^2_{k,N}<\varepsilon$, then the corresponding solution exists globally in time and is bounded such that
\begin{align}\label{norm_bounded}
\sup_{t\in[t_0,\infty)}\|f(t)\|^2_{k,N}\leq C\varepsilon.
\end{align}
To ensure that $f$ is differentiable with respect to $t$ we consider again the estimates \eqref{est_f} and \eqref{est_partial_f2}. For a multi-index $A$ we have
\[
\partial_t\Big[\langle p_*\rangle^{2k}e^{p^0}(\partial^Af)^2\Big]=\langle p_*\rangle^{2k}(\partial_tp^0)e^{p^0}(\partial^Af)^2+2\langle p_*\rangle^{2k}e^{p^0}(\partial^Af)(\partial^AQ)(f,f).
\]
We integrate the above with respect to $p_*$ and use the estimate $|\partial_tp^0|\leq C\langle p_*\rangle$ of Lemma \ref{lem_monotone} to estimate the first quantity on the right side. The estimate of the second quantity is the same as in Lemma \ref{lem_f2}. Collecting all $|A|\leq N$, we obtain the following:
\[
\bigg|\frac{d}{dt}\|f(t)\|^2_{k,N}\bigg|\leq \|f(t)\|^2_{k+\frac12,N}+C(\det g)^{-\frac14}\|f(t)\|^3_{k,N}.
\]
Since \eqref{norm_bounded} holds for any $k$, the quantity $\|f(t)\|^2_{k+1/2,N}$ is also bounded globally in time. The right hand side of the above inequality is bounded, and this shows that the solution $f$ is continuous and also differentiable with respect to $t$ by the equation \eqref{boltzmann}. Uniqueness is easily proved by taking two solutions $f$ and $g$ with $f(t_0)=g(t_0)$ and applying Gr{\"o}nwall's inequality. We obtain the following result.

\begin{prop}\label{prop_boltzmann}
Suppose that a spatial metric $g$ satisfies the assumption {\sf (A)}. Then, there exists a small $\varepsilon>0$ such that if initial data is given as $\|f(t_0)\|_{k+1/2,N}<\varepsilon$ for $N\geq 3$, then there exists a unique classical solution of the Boltzmann equation \eqref{boltzmann} which exists globally in time and satisfies 
\[
\sup_{t\in[t_0,\infty)}\|f(t)\|^2_{k,N}\leq C\varepsilon.
\]
\end{prop}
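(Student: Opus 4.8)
The plan is to run a continuation (bootstrap) argument: combine short-time existence with the closed cubic estimates of Lemmas~\ref{lem_f1} and~\ref{lem_f2} to propagate a small norm globally in time, and then upgrade the resulting bounded-norm solution to a genuinely $C^1$-in-$t$ classical solution by spending the extra half power of the weight $\langle p_*\rangle$. This last point is exactly why the smallness is imposed on $\|f(t_0)\|_{k+1/2,N}$ and not merely on $\|f(t_0)\|_{k,N}$.

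First I would establish local-in-time existence and uniqueness of a classical solution on a maximal interval $[t_0,T^*)$ by a standard iteration. Splitting $Q=Q_+-Q_-$ and writing the loss term as $Q_-(f,f)=f\,\nu(f)$, one sets $f^{(0)}=f(t_0)$ and lets $f^{(n+1)}$ solve the linear equation $\partial_t f^{(n+1)}=Q_+(f^{(n)},f^{(n)})-f^{(n+1)}\nu(f^{(n)})$; the kernel bounds and the bounds on derivatives of the post-collision momenta from Lemmas~\ref{lem_partial_1/p0}--\ref{lem_p'}, together with assumption~{\sf (A)}, make $\{f^{(n)}\}$ a contraction in $\|\cdot\|_{k,N}$ on a short interval, and its limit is the desired local solution. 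Only the behaviour of $g$ on that interval is used here.

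Next, summing Lemmas~\ref{lem_f1} and~\ref{lem_f2} over $|A|\leq N$ gives, on $[t_0,T^*)$,
\[
\|f(t)\|^2_{k,N}\leq\|f(t_0)\|^2_{k,N}+C\sup_{\tau\in[t_0,t]}\|f(\tau)\|^3_{k,N},
\]
with $C$ independent of $T^*$ because $(\det g)^{-1/4}=O(e^{-\frac32\gamma t})$ is integrable on $[t_0,\infty)$ by~{\sf (A)}. A routine continuity argument then shows that if $\|f(t_0)\|_{k,N}<\varepsilon$ with $\varepsilon$ small enough (so that the continuous function $t\mapsto\|f(t)\|^2_{k,N}$, starting below $\varepsilon^2$, cannot reach the barrier $(2\varepsilon)^2$ since $\varepsilon^2+C(2\varepsilon)^3<(2\varepsilon)^2$), then $\sup_{[t_0,T^*)}\|f(t)\|_{k,N}\leq 2\varepsilon$; hence the norm does not blow up, $T^*=\infty$, and~\eqref{norm_bounded} holds. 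Because Lemmas~\ref{lem_f1}--\ref{lem_f2} hold for every positive real $k$, running the same argument with $k$ replaced by $k+\tfrac12$ under the hypothesis $\|f(t_0)\|_{k+1/2,N}<\varepsilon$ yields a global bound on $\|f(t)\|^2_{k+1/2,N}$ as well.

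Finally, to verify that $f$ is a classical solution rather than merely a solution with bounded norm, I would differentiate $\langle p_*\rangle^{2k}e^{p^0}(\partial^A f)^2$ in $t$, integrate in $p_*$, bound the term produced by $\partial_t p^0$ using $|\partial_t p^0|\leq C\langle p_*\rangle$ from Lemma~\ref{lem_monotone} (this is what produces $\|f(t)\|^2_{k+1/2,N}$), and bound the remaining collision term exactly as in Lemma~\ref{lem_f2}; summing over $|A|\leq N$ shows that the time derivative of $\|f(t)\|^2_{k,N}$ is bounded in absolute value by $\|f(t)\|^2_{k+1/2,N}+C(\det g)^{-1/4}\|f(t)\|^3_{k,N}$, a finite and continuous expression, so $t\mapsto\|f(t)\|^2_{k,N}$ is $C^1$ and the equation~\eqref{boltzmann} then shows $\partial_t f$ is continuous. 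Uniqueness follows by taking the difference of two solutions with the same initial data and applying Gr\"onwall's inequality, once more using integrability of $(\det g)^{-1/4}$. The work here is largely bookkeeping; the one point deserving care is keeping the constant $C$ in the cubic estimate uniform in time — which is exactly where the decay $|g^{ab}|\leq Ce^{-2\gamma t}$ in~{\sf (A)} enters — and carrying the half power of weight so that the differentiability step closes.
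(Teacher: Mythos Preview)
Your proposal is correct and follows essentially the same route as the paper: local existence by iteration, the cubic a~priori estimate from Lemmas~\ref{lem_f1}--\ref{lem_f2} combined with a continuity argument to get global existence and \eqref{norm_bounded}, the differentiation of $\langle p_*\rangle^{2k}e^{p^0}(\partial^Af)^2$ together with $|\partial_t p^0|\leq C\langle p_*\rangle$ from Lemma~\ref{lem_monotone} to justify $C^1$ regularity in $t$ (which is precisely where the extra half weight is spent), and uniqueness via Gr\"onwall. If anything, your write-up is more explicit than the paper's about the bootstrap barrier and about why the smallness is imposed on $\|f(t_0)\|_{k+1/2,N}$.
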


\section{Proof of the main theorem and outlook}
We can now prove global-in-time existence of classical solutions to the Einstein-Boltzmann
system  \eqref{evolution1}--\eqref{constraint2} with a positive cosmological constant $\Lambda$. Suppose that initial data $g_{ab}(t_0)$, $k_{ab}(t_0)$, and $f(t_0)$ are given such that $H(t_0)<(7/6)^{1/2}\gamma$, and define an iteration for $\{g_n\}$, $\{k_n\}$, and $\{f_n\}$ as follows. Let $(g_0)_{ab}(t)=e^{2\gamma t} \bar{g}_{ab}(t_0)$ and $(k_0)_{ab}(t)=k_{ab}(t_0)$. Choose an orthonormal frame $(e_0)^a_b$ satisfying $(g_0)_{ab}=(e_0)^c_a(e_0)^d_b\eta_{cd}$, which is given by $(e_0)^a_b(t)=e^{\gamma t} \bar{e}^a_b(t_0)$, and let $(e_0^{-1})^a_b$ be the inverse of $(e_0)^a_b$. Then, $g_0$ satisfies the assumption {\sf (A)} of Section \ref{sec_boltzmann}. By Proposition \ref{prop_boltzmann}, there exists a small positive constant $\varepsilon$ such that if $\|f(t_0)\|_{k+1/2,N}<\varepsilon$, then there exists a unique classical solution $f_0$, which is the solution of the Boltzmann equation in a given spacetime with the metric $g_0$ and satisfies $\sup_{t\in[t_0,\infty)}\|f_0(t)\|^2_{k,N}\leq C\varepsilon$. Now, suppose that $f_n$ is given such that $\sup_{t\in[t_0,\infty)}\|f_n(t)\|^2_{k,N}\leq C\varepsilon$ with $\|f(t_0)\|_{k+1/2,N}<\varepsilon$. Then, we have
\[
\hat{f}_n(t,\hat{p})\leq C \varepsilon (1+ e^{2\gamma t} |\hat{p}|^2)^{-\frac12 k} e^{-\frac12 p^0},
\]
and applying Proposition \ref{prop_einstein} we obtain $g_{n+1}$ and $k_{n+1}$, which are the solutions of ODEs, which result when $g$ and $k$ of \eqref{evolution1}--\eqref{evolution2} are replaced by $g_{n+1}$ and $k_{n+1}$, respectively, and $\rho$ and $S_{ab}$ are constructed with $f_n$. It is clear that $g_{n+1}$ and $k_{n+1}$ satisfy the assumption {\sf (A)}, and appyling Proposition \ref{prop_boltzmann} again we obtain $f_{n+1}$, and this completes the iteration. The estimates of Propositions \ref{prop_einstein} and \ref{prop_boltzmann} show that the constructed quantities are uniformly bounded with the desired asymptotic behaviour, and taking the limit, up to a subsequence, we find classical functions $g$, $k$, and $f$. We have seen from Lemmas \ref{lem_F} and \ref{lem_monotone} that $p^0$ decays monotonically. As a consequence $p^0$ is bounded from above which gives us the future geodesic completeness. For more details we refer to \cite{LeeNungesser171}.

We thus have obtained the global existence and asymptotic behaviour of solutions to the Einstein-Boltzmann system, which extend the results of \cite{LeeNungesser171, LeeNungesser172}. A natural generalisation would be to consider higher Bianchi types. The isotropic spacetime with spatially flat topology and Bianchi I spacetimes are in fact the simplest in the sense that the Vlasov part is particularly simple. Thus, it would be of interest to extend \cite{LeeNungesser171, LeeNungesser172} to an FLRW spacetime with negative curvature with or without a cosmological constant. The scattering kernel considered here is physically well-motivated, however for simplicity we assumed that it does not depend on the scattering angle. A generalisation would thus be to remove this restriction. Similarly it is desirable to remove the smallness assumption and obtain a large data result as in \cite{ND}. Finally based on the work of Tod \cite{Tod} it is of interest to study this system with singular initial data, and these topics will be our future projects.

\section*{Acknowledgements}
H. Lee has been supported by the TJ Park Science Fellowship of POSCO TJ Park Foundation. This research was supported by Basic Science Research Program through the National Research Foundation of Korea (NRF) funded by the Ministry of Science, ICT \& Future Planning (NRF-2015R1C1A1A01055216). E.N. is currently funded by a Juan de la Cierva research fellowship from the Spanish government and this work has been partially supported by ICMAT Severo Ochoa project SEV-2015-0554 (MINECO).

\end{document}